\documentclass[twocolumn,aps,showpacs,showkeys,byrevtex,pra]{revtex4-1}%
\pdfoutput=1
\usepackage{amsfonts}
\usepackage{amsmath}
\usepackage{amssymb}
\usepackage{graphicx}%
\setcounter{MaxMatrixCols}{30}
%TCIDATA{OutputFilter=latex2.dll}
%TCIDATA{Version=5.50.0.2953}
%TCIDATA{CSTFile=revtex4.cst}
%TCIDATA{Created=Saturday, April 16, 2011 14:09:56}
%TCIDATA{LastRevised=Wednesday, June 20, 2012 19:42:26}
%TCIDATA{<META NAME="GraphicsSave" CONTENT="32">}
%TCIDATA{<META NAME="SaveForMode" CONTENT="1">}
%TCIDATA{BibliographyScheme=Manual}
%TCIDATA{<META NAME="DocumentShell" CONTENT="Articles\SW\REVTeX 4">}
%BeginMSIPreambleData
\providecommand{\U}[1]{\protect\rule{.1in}{.1in}}
%EndMSIPreambleData
\newtheorem{theorem}{Theorem}

\newtheorem{lemma}[theorem]{Lemma}

\newtheorem{proposition}[theorem]{Proposition}

\newenvironment{proof}[1][Proof]{\noindent\textbf{#1.} }{\ \rule{0.5em}{0.5em}}
\begin{document}
\preprint{ }
\title{Quantum trade-off coding for bosonic communication}
\author{Mark M. Wilde and Patrick Hayden}
\affiliation{\textit{School of Computer Science, McGill University, Montreal, Qu\'{e}bec,
Canada H3A 2A7}}
\author{Saikat Guha}
\affiliation{\textit{Quantum Information Processing Group, Raytheon BBN
Technologies, Cambridge, Massachusetts, USA 02138}}
\keywords{trade-off coding, quantum Shannon theory, bosonic channels, entanglement,
secret key}
\pacs{03.67.Hk, 03.67.Pp, 04.62.+v}

\begin{abstract}
The trade-off capacity region of a quantum channel characterizes the optimal
net rates at which a sender can communicate classical, quantum, and entangled
bits to a receiver by exploiting many independent uses of the channel, along
with the help of the same resources. Similarly, one can consider a trade-off
capacity region when the noiseless resources are public, private, and secret
key bits. In [Phys.~Rev.~Lett.~\textbf{108}, 140501 (2012)], we identified
these trade-off rate regions for the pure-loss bosonic channel and proved that
they are optimal provided that a long-standing minimum output entropy
conjecture is true. Additionally, we showed that the performance gains of a
trade-off coding strategy when compared to a time-sharing strategy can be
quite significant. In the present paper, we provide detailed derivations of the
results announced there, and we extend the application of these ideas to
thermal-noise and amplifying bosonic channels. We also derive a
\textquotedblleft rule of thumb\textquotedblright\ for trade-off coding, which
determines how to allocate photons in a coding strategy if a large mean photon
number is available at the channel input. Our results on the amplifying
bosonic channel also apply to the \textquotedblleft Unruh channel\textquotedblright%
\ considered in the context of relativistic quantum information theory.

\end{abstract}
\date{\today}
\startpage{1}
\endpage{10}
\maketitle

\section{Introduction}

One of the great scientific accomplishments of the last century was Shannon's
formulation of information theory and his establishment of its two fundamental
theorems \cite{bell1948shannon}. Shannon's first theorem states that the
entropy of an information source is the best rate at which it can be
compressed, in the limit where many copies of the source are available. His
second theorem states that the maximum mutual information of a classical
channel is the highest rate at which information can be transmitted error-free
over such a channel, again in a limit where many independent uses of the
channel are available.

Shannon's theory is certainly successful in a world obeying the laws of
classical physics, but as we know, quantum mechanics is necessary in order to
describe the true physical nature of many communication channels. In
particular, a quantum-mechanical model is especially important for the case of
optical communication over free space or fiber-optic channels (the name for a
simple model of this channel is the \textit{pure-loss bosonic channel}
\cite{WPGCRSL12}). As such, we should also revise Shannon's theory of
information in order to account for quantum-mechanical effects, and Holevo,
Schumacher, and Westmoreland (HSW) \cite{ieee1998holevo,SW97} were some of the
first to begin this effort by proving that a quantity now known as the Holevo
information is an achievable rate for classical communication over a quantum
channel. Revising Shannon's information theory is not merely a theoretical
curiosity---the promise of quantum information theory is that communication
rates can be boosted by doing so \cite{YO93,GGLMSY04}, and recent experiments
have improved the state of the art in approaching the limits on communication
given by quantum information theory \cite{CHDLG12}.

The task of communicating classical data is certainly important, but the
communication of quantum data could be just as important, given the advent of
quantum computation \cite{book2000mikeandike} and given the possibility that distributed quantum
computation might one day become a reality \cite{BKBFZW12}. Lloyd
\cite{L97}, Shor \cite{capacity2002shor}, and Devetak \cite{ieee2005dev}%
\ (LSD) gave increasingly rigorous proofs that a quantity known as the
coherent information of a quantum channel is an achievable rate for quantum
communication, after Schumacher and others identified that this quantity would
be relevant for quantum data transmission
\cite{PhysRevA.54.2629,PhysRevLett.80.5695,BKN98,BNS98}.

In future communication networks, it is likely that a sender and receiver will
not be using communication channels to transmit either classical data alone or quantum
data alone, but rather that the data being transmitted will be a mix
of these data types. Additionally, it could be that the sender and receiver might
share some entanglement before communication begins, and it is well known that
entanglement can boost transmission rates \cite{PhysRevLett.83.3081,BSST01}. A
simple strategy for simultaneously communicating classical and quantum data
would be for the sender and receiver to use the best HSW\ classical code for a
fraction of the time and to use the best LSD\ quantum code for the other
fraction of the time (this simple strategy is known as time sharing). 
Devetak and Shor, however, demonstrated that this is not the optimal strategy in general
and that a trade-off coding strategy can outperform a time-sharing strategy
\cite{PhysRevLett.83.3081}. In short, a trade-off coding strategy is one in
which the sender encodes classical information into the many different ways of
permuting quantum error-correcting codes. After obtaining the channel outputs,
the receiver first performs a measurement to identify which permutation the
sender employed, and as long as the total number of permutations is not too
large, it is possible to identify the permutation with arbitrarily high
probability (and thus recovering the classical data that the sender transmitted).
Since this measurement is successful, it causes a negligible disturbance to
the state \cite{itit1999winter}, and the receiver can then decode the quantum
information encoded in the quantum codes.

We showed recently \cite{WHG12}\ that the performance gains can be very significant
for a pure-loss bosonic channel when employing a trade-off coding strategy
rather than a time-sharing strategy. (In this case, the trade-off coding
strategy amounts to a power-sharing strategy, like those considered in
classical information theory \cite{book1991cover}). We briefly summarize the
main results of Ref.~\cite{WHG12}. Suppose that a sender and receiver are
allowed access to many independent uses of a pure-loss bosonic channel with
transmissivity parameter $\eta\in\left[  0,1\right]  $ (so that $\eta$ is the
average fraction of photons that make it to the receiver). Suppose further
that the sender is power-constrained to input $N_{S}$ photons to the channel
on average per channel use. Let $C$ be the net rate of classical
communication (in bits per channel use), $Q$ the net rate of quantum communication
 (in qubits per channel use), and $E$ the net rate
of entanglement generation (in ebits per channel use). If a given rate is positive, then it means the
protocol generates the corresponding resource at the given rate, whereas if a
given rate is negative, then the protocol consumes the corresponding resource
at the given rate. The first main result of Ref.~\cite{WHG12}\ is that the
following rate region is achievable:%
\begin{align}
C+2Q &  \leq g\left(  \lambda N_{S}\right)  +g\left(  \eta N_{S}\right)
-g\left(  \left(  1-\eta\right)  \lambda N_{S}\right)  ,\nonumber\\
Q+E &  \leq g\left(  \eta\lambda N_{S}\right)  -g\left(  \left(
1-\eta\right)  \lambda N_{S}\right)  ,\nonumber\\
C+Q+E &  \leq g\left(  \eta N_{S}\right)  -g\left(  \left(  1-\eta\right)
\lambda N_{S}\right)  ,
\end{align}
where $g\left(  x\right)  \equiv\left(  x+1\right)  \log_{2}\left(
x+1\right)  -x\log_{2}x$ is the entropy of a thermal distribution with mean
photon number $x$ and $\lambda\in\left[  0,1\right]  $ is a photon-number
sharing parameter, determining the fraction of photons the code dedicates to
quantum communication (so that $1-\lambda$ is the fraction of photons that the
code dedicates to classical communication). The full region is the union over
all of the three-faced polyhedra given by the above inequalities, as the
photon-number sharing parameter $\lambda$ increases from zero to one. This
region is optimal whenever $\eta\geq1/2$ provided that a longstanding minimum
output entropy conjecture is true \cite{GGLMS04,GHLM10,GSE07,G08}, which
remains unproven but towards which proving there has been some good recent progress
\cite{KS12a,KS12b}.

The second main result of Ref.~\cite{WHG12}\ applies to a different though
related setting. The sender and receiver are again given access to many
independent uses of a pure-loss bosonic channel, but this time, the trade-off
is between the rate $R$ of public classical communication, the rate $P$ of
private classical communication, and the rate $S$ of secret key. We showed
that the following region is achievable:%
\begin{align}
R+P &  \leq g\left(  \eta N_{S}\right)  ,\nonumber\\
P+S &  \leq g\left(  \lambda\eta N_{S}\right)  -g\left(  \lambda\left(
1-\eta\right)  N_{S}\right)  ,\nonumber\\
R+P+S &  \leq g\left(  \eta N_{S}\right)  -g\left(  \lambda\left(
1-\eta\right)  N_{S}\right)  ,
\end{align}
where the parameter $\lambda\in\left[  0,1\right]  $ is again a photon-number
sharing parameter, determining the fraction of photons the code dedicates
to private communication (so that $1-\lambda$ is the fraction of photons that
the code dedicates to public communication). Optimality of this region whenever $\eta\geq1/2$ is again
subject to the same longstanding minimum output entropy conjecture. This
result has a practical relevance for a quantum key distribution protocol
executed using a pure-loss bosonic channel and forward public classical communication only.

In this paper, we provide detailed derivations of the above results announced
in Ref.~\cite{WHG12}, and we extend the ideas there to
thermal-noise and amplifying bosonic channels.
Section~\ref{sec:review-trade-off} recasts known results on trade-off
coding~\cite{HW08ITIT,HW10,WH10,WH10a}\ into a form more suitable for the
bosonic setting. Section~\ref{sec:lossy-bosonic} proves that the regions in
(1) and (2) are achievable and that they are the capacity regions provided a
long-standing minimum-output entropy conjecture is true \footnote{The simplest
version of the minimum-output entropy conjecture is stated as follows. Suppose that a thermal state
is injected into one input port of a beamsplitter. What state should we inject into the other input
port to minimize the entropy at one of the output ports? The conjecture is that the state should
just be the vacuum or a coherent state, but this question has been open
for some time \cite{GSE07,G08,GGLMS04,GSE08,GHLM10}. See Ref.~\cite{KS12a,KS12b} for recent progress.
}.
Section~\ref{sec:lossy-bosonic} also gives expressions for special cases of
the regions in (1) and (2), demonstrates that trade-off coding always beats
time-sharing whenever $0<\eta<1$, rigorously establishes a \textquotedblleft
rules of thumb\textquotedblright\ for trade-off coding outlined in
Ref.~\cite{WHG12}, and considers the high photon number limit for some special
cases. Sections~\ref{sec:thermal} and~\ref{sec:amplifier} give achievable rate
regions for the thermal-noise and amplifier bosonic channels, respectively. Finally,
Section~\ref{sec:unruh}\ demonstrates that an encoding with a fixed mean
photon budget can beat the rates achieved with a single-excitation encoding
for the \textquotedblleft Unruh channel.\textquotedblright\ We summarize our
results in the conclusion and suggest interesting lines of inquiry for
future research.

\section{Review of Trade-off Regions}

\label{sec:review-trade-off}In recent work, Wilde and Hsieh have synthesized
all known single-sender single-receiver quantum and classical communication
protocols over a general noisy quantum channel into a three-way dynamic
capacity region that they call the \textquotedblleft
triple-trade-off\textquotedblright\ region~\cite{HW08ITIT,HW10,WH10}. They
consider a three-dimensional region whose points ($C,Q,E$) correspond to rates
of classical communication, quantum communication, and entanglement generation
(or consumption) respectively. Let us illustrate a few simple examples of such
triple trade-offs. In the \emph{qubit teleportation} protocol \cite{BBCJPW93},
communication (consumption) of two classical bits and communication
(consumption) of one ebit of shared entanglement is used to communicate
(generate) one noiseless qubit, giving a rate triple $(-2,1,-1)$, where a
minus sign indicates the consumption of a resource and a plus sign indicates
the generation of a resource. Similarly, the superdense coding protocol
consumes a noiseless qubit channel and an ebit to generate two classical bits
\cite{BBCJPW93}. It corresponds to the rate triple $(2,-1,-1)$. Another simple
protocol is entanglement distribution, which communicates one noiseless qubit
to generate one ebit of shared entanglement, thereby yielding the rate triple
$(0,-1,1)$. Wilde and Hsieh have shown that when a communication channel is
noisy, the above three unit-resource protocols, in conjunction with the
\emph{classically-enhanced father protocol}~\cite{HW08ITIT,W11}, can be used
to derive the ultimate $(C,Q,E)$ trade-off space for any noisy quantum
channel. The dynamic capacity region's formulas are regularized over multiple
channel uses, and hence the triple trade-off region may in general be
superadditive. Another dynamic capacity region that Wilde and Hsieh
characterize is the trade-off between private communication rate $P$
(generation or consumption), public communication rate $R$ (generation or
consumption) and secret-key rate $S$ (distribution or
consumption)~\cite{WH10a}. We should clarify that these public and private rates are for
forward communication (generation or consumption).

\subsection{Quantum Dynamic Region}

\begin{proposition}
\label{prop:CQE-discrete}The quantum dynamic capacity region of a quantum
channel $\mathcal{N}$ is the regularization of the union of regions of the
form:%
\begin{align}
C+2Q  &  \leq H\left(  \mathcal{N}\left(  \rho\right)  \right) \nonumber\\
&  \ \ \ \ \ \ \ \ +\sum_{x}p_{X}\left(  x\right)  \left[  H\left(  \rho
_{x}\right)  -H\left(  \mathcal{N}^{c}\left(  \rho_{x}\right)  \right)
\right]  ,\label{eq:triple-bosonic-help-1}\\
Q+E  &  \leq\sum_{x}p_{X}\left(  x\right)  \left[  H\left(  \mathcal{N}\left(
\rho_{x}\right)  \right)  -H\left(  \mathcal{N}^{c}\left(  \rho_{x}\right)
\right)  \right]  ,\label{eq:triple-bosonic-help-2}\\
C+Q+E  &  \leq H\left(  \mathcal{N}\left(  \rho\right)  \right)  -\sum
_{x}p_{X}\left(  x\right)  H\left(  \mathcal{N}^{c}\left(  \rho_{x}\right)
\right)  , \label{eq:triple-bosonic-help-3}%
\end{align}
where $H(\sigma)\equiv-\operatorname{Tr} ( \sigma\ln\sigma)$, $\left\{
p_{X}\left(  x\right)  ,\rho_{x}\right\}  $ is an ensemble with expected
density operator $\rho\equiv\sum_{x}p_{X}\left(  x\right)  \rho_{x}$ and
$\mathcal{N}^{c}$ is the channel complementary to the channel $\mathcal{N}$.
\end{proposition}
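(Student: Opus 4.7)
The plan is to establish both the direct coding theorem (achievability) and the converse (optimality) for the region claimed in the proposition, following the triple-trade-off framework of Wilde and Hsieh cited in the preceding text. Because the stated region is a union over input ensembles $\{p_{X}(x),\rho_{x}\}$, for each fixed ensemble I would prove separately that the three-faced polyhedron it defines is achievable, and that any achievable $(C,Q,E)$ for a single use of $\mathcal{N}$ must satisfy the three inequalities for some such ensemble. Regularization over $n$ channel uses then delivers the full capacity region by replacing $\mathcal{N}$ with $\mathcal{N}^{\otimes n}$ and rescaling by $1/n$.

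For achievability, I would first realize one distinguished corner point via the \emph{classically-enhanced father} (CEF) protocol. Concretely, for a fixed ensemble $\{p_{X}(x),\rho_{x}\}$, the sender constructs an inner entanglement-assisted quantum code on each conditional state $\rho_{x}$, whose coherent-information rate averages to $\sum_{x}p_{X}(x)[H(\mathcal{N}(\rho_{x}))-H(\mathcal{N}^{c}(\rho_{x}))]$, and an outer HSW classical code over the labels $x$, whose Holevo rate is $H(\mathcal{N}(\rho))-\sum_{x}p_{X}(x)H(\mathcal{N}(\rho_{x}))$. The receiver first decodes $x$ with asymptotically negligible disturbance via Winter's gentle-measurement lemma, then decodes the quantum-plus-entanglement payload. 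This gives a single rate triple that saturates the intersection of the three bounds; sweeping out the rest of the polyhedron is then accomplished by composing the CEF protocol with the three unit-resource protocols already tabulated in the review section, namely teleportation $(-2,1,-1)$, super-dense coding $(2,-1,-1)$, and entanglement distribution $(0,-1,1)$. Each of the three unit protocols is responsible for one of the three faces in \eqref{eq:triple-bosonic-help-1}--\eqref{eq:triple-bosonic-help-3}.

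For the converse, I would assume a sequence of codes with vanishing error and use the Alicki--Fannes continuity of entropy together with the quantum data-processing inequality on the isometric extension of $\mathcal{N}$. The three inequalities arise by bounding three different combinations of entropies: $C+2Q$ is controlled by combining Holevo's bound on classical information with a super-dense-coding--style inequality; $Q+E$ is controlled by the coherent information averaged over a classical auxiliary register $X$ carrying the outer code label; and $C+Q+E$ is controlled by Holevo on the joint classical-quantum-entangled message minus the average environment entropy. The main technical move is to introduce the auxiliary classical system $X$ in the converse exactly where it appears in the direct part and to apply the chain rule for conditional quantum mutual information.

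The principal obstacle will be the converse, in two respects: first, the careful identification of the auxiliary classical variable $X$ that simultaneously witnesses all three single-letter bounds, and second, the fact that the claimed region is only a regularization, so for a general channel no single-letter characterization is known. Because the proposition is stated as a regularized region, I do not need to prove additivity here; I only need to establish the single-copy achievability and converse and then take the standard limit. I expect the direct part to follow directly from the Wilde--Hsieh protocol without new ingredients, while the converse requires the bookkeeping above to extract all three inequalities simultaneously from a single hypothetical protocol.
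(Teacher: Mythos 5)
The paper does not reprove the trade-off theorem at all: its proof consists of (i) invoking the already-established characterization of the quantum dynamic capacity region from Wilde and Hsieh, namely $C+2Q\le I(AX;B)$, $Q+E\le I(A\rangle BX)$, $C+Q+E\le I(X;B)+I(A\rangle BX)$ evaluated on classical--quantum states $\sum_x p_X(x)\,|x\rangle\langle x|^X\otimes\mathcal{N}(\phi_x^{AA'})$ with \emph{pure} conditional states, (ii) an argument that the converse of that theorem survives in infinite dimensions (reduction to finite-dimensional decoded systems, continuity, monotonicity of relative entropy), and (iii) the purely algebraic rewriting $I(AX;B)=H(A|X)+H(B)-H(E|X)$ and $I(A\rangle BX)=H(B|X)-H(E|X)$, which, because each $\phi_x^{AA'}$ is pure, turns the region into the stated entropic form via $H(A|X)=\sum_x p_X(x)H(\rho_x)$, $H(B|X)=\sum_x p_X(x)H(\mathcal{N}(\rho_x))$, $H(E|X)=\sum_x p_X(x)H(\mathcal{N}^c(\rho_x))$. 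Your proposal instead re-derives the underlying theorem from scratch. The achievability outline (classically-enhanced father combined with teleportation, super-dense coding and entanglement distribution) is indeed the route taken in the cited works, so that half is sound in outline, though note the CEF corner point is $\bigl(I(X;B),\tfrac{1}{2}I(A;B|X),-\tfrac{1}{2}I(A;E|X)\bigr)$ rather than an inner code operating at the coherent information; it saturates $Q+E=I(A\rangle BX)$ only via the identity $I(A;B|X)-I(A;E|X)=2I(A\rangle BX)$, valid because the conditional inputs are pure.

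The genuine gaps are in the other half and in the translation. First, your converse is only gestured at: ``Alicki--Fannes continuity plus data processing plus a chain rule'' does not by itself produce the three bounds; the actual converse in the cited work is a lengthy catalytic argument in which $X$ is identified with the classical message and the entanglement, classical, and quantum message registers are tracked simultaneously. You flag exactly this as the principal obstacle without resolving it, so the proposal as written does not establish the proposition. Second, you never make the purification step explicit: the proposition is stated for ensembles of (generally mixed) input states $\rho_x$, and the term $\sum_x p_X(x)H(\rho_x)$ in the $C+2Q$ bound arises only because each $\rho_x$ is the reduction of a pure $\phi_x^{AA'}$, so that $H(A|X)=\sum_x p_X(x)H(\rho_x)$ and $H(E|X)=H(AB|X)$; this rewriting is essentially the entire content of the paper's proof and is absent from yours. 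Third, for the intended bosonic application the channel is infinite-dimensional, where unrestricted Alicki--Fannes continuity is unavailable; the paper handles this by arguing that the decoded systems can be taken finite-dimensional and that the data-processing step rests on monotonicity of relative entropy, which is robust in infinite dimensions---a point any self-contained converse along your lines would also have to address.
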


\begin{proof}
The above proposition is just a rephrasing of the results from
Refs.~\cite{HW08ITIT,HW10,WH10}. First recall that the quantum mutual information of a bipartite state
$\rho^{AB}$ is defined as follows:
$$
I(A;B)_{\rho} \equiv H(A)_{\rho} +  H(B)_{\rho} -  H(AB)_{\rho},
$$
the conditional entropy is defined as
$$
H(A|B)_{\rho} \equiv H(AB)_{\rho} - H(B)_{\rho},
$$
and the coherent information is
$$
I(A\rangle B)_{\rho} \equiv - H(A|B)_{\rho} .
$$
Ref.~\cite{WH10} derived that the quantum
dynamic capacity region is the regularization of the union of regions of the
form:%
\begin{align}
C+2Q &  \leq I\left(  AX;B\right)  _{\rho},\label{eq:general-CQE-1}\\
Q+E &  \leq I\left(  A\rangle BX\right)  _{\rho},\label{eq:general-CQE-2}\\
C+Q+E &  \leq I\left(  X;B\right)  _{\rho}+I\left(  A\rangle BX\right)
_{\rho},\label{eq:general-CQE-3}%
\end{align}
where the union is over all classical-quantum states $\rho^{XAB}$ of the
following form:%
\begin{equation}
\rho^{XAB}\equiv\sum_{x}p_{X}\left(  x\right)  \left\vert x\right\rangle
\left\langle x\right\vert ^{X}\otimes\mathcal{N}^{A^{\prime}\rightarrow
B}(\phi_{x}^{AA^{\prime}}),\label{eq:code-states}%
\end{equation}
and each $\phi_{x}^{AA^{\prime}}$ is a pure, bipartite
state~\cite{HW08ITIT,HW10,WH10}. The convention in the above formulas is that
a rate for a resource is positive when the communication protocol generates
that resource, and the rate for a resource is negative when the communication
protocol consumes that resource. The above characterization is the full triple
trade-off, including both positive and negative rates. A simple strategy for
achieving the above capacity region is to combine the \textquotedblleft
classically-enhanced father protocol\textquotedblright\ \cite{HW08ITIT,W11}
with teleportation \cite{BBCJPW93}, superdense coding \cite{BW92}, and
entanglement distribution. The classically-enhanced father protocol is a
coding strategy that can communicate both classical and quantum information
with the help of shared entanglement. It generalizes both of the trade-off
coding strategies from Refs.~\cite{DS03,S04}.

We should clarify that the converse theorem from Ref.~\cite{WH10} applies even for the case of an infinite-dimensional channel. Without loss of generality, the sender's systems storing the shared entanglement, classical and quantum data to be transmitted can be assumed to be finite-dimensional, as can the receiver's systems after the decoding operation. If not, the sender and receiver can isometrically transfer their information to finite dimensional systems with negligible loss of fidelity. This is because finite numbers of perfect cbits, qubits, and ebits occupy only finite-dimensional subspaces of an infinite-dimensional Hilbert space. The converse theorem still holds in the infinite-dimensional case because all of the proofs begin by reasoning about the amount of information shared between a reference system and the decoded systems, which are finite-dimensional by the above assumption. Thus, applying continuity of entropy (as is done as a first step in all of the proofs) is not problematic. The proofs then make use of quantum data processing from the outputs of the infinite-dimensional channel to the decoded systems, and it is well-known that quantum data processing follows from monotonicity of quantum relative entropy \cite{W11}, an inequality which is robust in the infinite-dimensional case \cite{U77}. (This is the essential ingredient behind the Yuen-Ozawa proof of the infinite-dimensional variation of the Holevo bound [6].) Thus, the converse theorem from Ref.~\cite{WH10} still holds for the infinite-dimensional case. (A similar statement applies to the converse theorem from Ref.~\cite{WH10a}, which we employ later on in Section~\ref{sec:private-dynamic-review}.)

Due to the particular form of the state in (\ref{eq:code-states}), we can
rewrite the inequalities in (\ref{eq:general-CQE-1}-\ref{eq:general-CQE-3}) as%
\begin{align}
C+2Q  &  \leq H\left(  A|X\right)  _{\rho}+H\left(  B\right)  _{\rho}-H\left(
E|X\right)  _{\rho},\label{eq:CQE-entropies-1}\\
Q+E  &  \leq H\left(  B|X\right)  _{\rho}-H\left(  E|X\right)  _{\rho
},\label{eq:CQE-entropies-2}\\
C+Q+E  &  \leq H\left(  B\right)  _{\rho}-H\left(  E|X\right)  _{\rho},
\label{eq:CQE-entropies-3}%
\end{align}
where the entropies are now with respect to the following classical-quantum
state%
\begin{equation}
\rho^{XABE}\equiv\sum_{x}p_{X}\left(  x\right)  \left\vert x\right\rangle
\left\langle x\right\vert ^{X}\otimes U_{\mathcal{N}}^{A^{\prime}\rightarrow
BE}(\phi_{x}^{AA^{\prime}}),\label{eq:aug-code-states}%
\end{equation}
and $U_{\mathcal{N}}^{A^{\prime}\rightarrow BE}$ is an isometric extension of
the channel $\mathcal{N}^{A^{\prime}\rightarrow B}$. We can also think of the
information quantities as being with respect to the following input ensemble
(isomorphic to the input classical-quantum state in~(\ref{eq:aug-code-states}%
)):%
\begin{equation}
\left\{  p_{X}\left(  x\right)  ,\phi_{x}^{AA^{\prime}}\right\}  .
\label{eq:code-ensemble}%
\end{equation}
Let $\rho_{x}\equiv\ $Tr$_{A}\left\{  \phi_{x}^{AA^{\prime}}\right\}  $ and
let $\rho\equiv\sum_{x}p_{X}\left(  x\right)  \rho_{x}$. We then obtain the
inequalities in the statement of the proposition by substituting into
(\ref{eq:CQE-entropies-1}-\ref{eq:CQE-entropies-3}). \end{proof}

Observe that it suffices to calculate just four entropies in order to
determine the achievable rates $(C,Q,E)$ associated to particular input
ensemble:%
\begin{align}
&  H\left(  \mathcal{N}\left(  \rho\right)  \right)
,\label{eq:triple-trade-off-entropies-1}\\
&  \sum_{x}p_{X}\left(  x\right)  H\left(  \rho_{x}\right)
,\label{eq:triple-trade-off-entropies-2}\\
&  \sum_{x}p_{X}\left(  x\right)  H\left(  \mathcal{N}\left(  \rho_{x}\right)
\right)  ,\label{eq:triple-trade-off-entropies-3}\\
&  \sum_{x}p_{X}\left(  x\right)  H\left(  \mathcal{N}^{c}\left(  \rho
_{x}\right)  \right)  . \label{eq:triple-trade-off-entropies-4}%
\end{align}

\subsection{Private Dynamic Region}

\label{sec:private-dynamic-review}The private dynamic capacity region of a
quantum channel captures the trade-off between public classical communication,
private classical communication, and secret key~\cite{WH10a}.

\begin{proposition}
\label{prop:RPS-dynamic} The private dynamic capacity region of a degradable
quantum channel $\mathcal{N}$ is the regularization of the union of regions of
the form:%
\begin{align}
R+P  &  \leq H\left(  \mathcal{N}\left(  \rho\right)  \right) \nonumber\\
&  \ \ \ \ \ \ \ \ -\sum_{x,y}p_{X}\left(  x\right)  p_{Y|X}\left(
y|x\right)  H\left(  \mathcal{N}\left(  \psi_{x,y}\right)  \right)
,\label{eq:private-dynamic-region-1}\\
P+S  &  \leq\sum_{x}p_{X}\left(  x\right)  \left[  H\left(  \mathcal{N}\left(
\rho_{x}\right)  \right)  -H\left(  \mathcal{N}^{c}\left(  \rho_{x}\right)
\right)  \right]  ,\\
R+P+S  &  \leq H\left(  \mathcal{N}\left(  \rho\right)  \right)  -\sum
_{x}p_{X}\left(  x\right)  H\left(  \mathcal{N}^{c}\left(  \rho_{x}\right)
\right)  , \label{eq:private-dynamic-region-3}%
\end{align}
where $\left\{  p_{X}\left(  x\right)  p_{Y|X}\left(  y|x\right)  ,\left\vert
\psi_{x,y}\right\rangle \right\}  $ is an ensemble of pure states, $\rho
_{x}\equiv\sum_{y}p_{Y|X}\left(  y|x\right)  \psi_{x,y}$, $\rho\equiv\sum
_{x}p_{X}\left(  x\right)  \rho_{x}$, and $\mathcal{N}^{c}$ is the channel
complementary to $\mathcal{N}$.
\end{proposition}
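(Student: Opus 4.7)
The plan is to follow the same template as the proof of Proposition~\ref{prop:CQE-discrete}: start from the general information-theoretic characterization of the private dynamic region in Ref.~\cite{WH10a} and rewrite the mutual informations as entropies of specific channel outputs, using the structure of the input ensemble together with the degradability hypothesis. Specifically, Ref.~\cite{WH10a} characterizes the private dynamic region as the regularization of the union, taken over classical-classical-quantum states
\[
\rho^{XYBE} \equiv \sum_{x,y} p_{X}(x)\, p_{Y|X}(y|x)\, |x\rangle\langle x|^{X} \otimes |y\rangle\langle y|^{Y} \otimes U_{\mathcal{N}}^{A'\to BE}(\psi_{x,y}^{A'}),
\]
of triples $(R,P,S)$ satisfying $R+P \leq I(XY;B)_{\rho}$, $P+S \leq I(Y;B|X)_{\rho} - I(Y;E|X)_{\rho}$, and $R+P+S \leq I(X;B)_{\rho} + I(Y;B|X)_{\rho} - I(Y;E|X)_{\rho}$, where $U_{\mathcal{N}}$ is an isometric extension of $\mathcal{N}$ and each $\psi_{x,y}$ is pure.

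The core of the derivation is then to expand these three informations using entropy identities, in parallel with the manipulations leading to (\ref{eq:CQE-entropies-1})--(\ref{eq:CQE-entropies-3}) for the quantum region. The key observation is that, since $\psi_{x,y}$ is pure and $U_{\mathcal{N}}$ is an isometry, $U_{\mathcal{N}}(\psi_{x,y})$ is pure on $BE$, so $H(B|XY)_{\rho} = H(E|XY)_{\rho} = \sum_{x,y} p_{X}(x) p_{Y|X}(y|x) H(\mathcal{N}(\psi_{x,y}))$. The $R+P$ bound (\ref{eq:private-dynamic-region-1}) follows immediately from $I(XY;B)_{\rho} = H(\mathcal{N}(\rho)) - H(B|XY)_{\rho}$. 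For the $P+S$ bound, this cancellation collapses the difference of conditional mutual informations to $I(Y;B|X)_{\rho} - I(Y;E|X)_{\rho} = H(B|X)_{\rho} - H(E|X)_{\rho}$, which equals $\sum_{x} p_{X}(x) [H(\mathcal{N}(\rho_{x})) - H(\mathcal{N}^{c}(\rho_{x}))]$ once one recognizes that tracing out $Y$ on $A'$ leaves the averaged input $\rho_{x}$. Adding $I(X;B)_{\rho} = H(\mathcal{N}(\rho)) - H(B|X)_{\rho}$ to this $P+S$ expression cancels the $H(B|X)_{\rho}$ terms and produces (\ref{eq:private-dynamic-region-3}).

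The main obstacle is not the entropy algebra but verifying two side conditions. First, the converse from Ref.~\cite{WH10a} must be lifted to the infinite-dimensional setting that we ultimately need for bosonic channels; I would reuse the reasoning given after Proposition~\ref{prop:CQE-discrete}, namely that the transmitted public, private, and secret-key registers occupy finite-dimensional subspaces (so continuity of entropy applies) and that the remaining data-processing steps rely on monotonicity of quantum relative entropy, which is robust in infinite dimensions \cite{U77}. Second, the use of degradability must be made precise: it is what guarantees that the coherent-information-type quantity $H(\mathcal{N}(\rho_{x})) - H(\mathcal{N}^{c}(\rho_{x}))$ appearing in the $P+S$ bound is non-negative and that no additional auxiliary classical variable beyond $X$ and $Y$ is needed in the regularized formula, so that the region indeed admits the clean form stated in (\ref{eq:private-dynamic-region-1})--(\ref{eq:private-dynamic-region-3}).
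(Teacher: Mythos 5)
Your proposal is correct and follows essentially the same route as the paper: both reduce the statement to the Wilde--Hsieh private dynamic characterization from Ref.~\cite{WH10a} and then rewrite the information quantities in entropic form by substituting the ensemble, with your purity cancellation $H(B|XY)_{\rho}=H(E|XY)_{\rho}$ being exactly the step the paper leaves implicit by quoting Lemma~6 of Ref.~\cite{WH10a} directly in its degradable entropic form. The one detail to sharpen is the role of degradability: what it buys (per that lemma, and as the paper remarks right after the proposition) is that pure states $\psi_{x,y}$ suffice to achieve the region, rather than non-negativity of $H(\mathcal{N}(\rho_{x}))-H(\mathcal{N}^{c}(\rho_{x}))$ or the absence of additional auxiliary classical variables.
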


\begin{proof}
This proposition is a rephrasing of the results in Ref.~\cite{WH10a}. For a
degradable quantum channel $\mathcal{N}$, Lemma~6 of Ref.~\cite{WH10a}\ states
that the private dynamic region is as follows:%
\begin{align*}
R+P  &  \leq I\left(  YX;B\right)  _{\omega},\\
P+S  &  \leq H\left(  B|X\right)  _{\omega}-H\left(  E|X\right)  _{\omega},\\
R+P+S  &  \leq H\left(  B\right)  _{\omega}-H\left(  E|X\right)  _{\omega},
\end{align*}
where $R$ is the rate of public classical communication, $P$ is the rate of
private classical communication, $S$ is the rate of secret key generation or
consumption, the state $\omega^{XYBE}$ is a state of the following form:%
\begin{multline*}
\omega^{XYBE}\equiv\sum_{x,y}p_{X}\left(  x\right)  p_{Y|X}\left(  y|x\right)
\left\vert x\right\rangle \left\langle x\right\vert ^{X}\\
\otimes\left\vert y\right\rangle \left\langle y\right\vert ^{Y}\otimes
U_{\mathcal{N}}^{A\rightarrow BE}\left(  \psi_{x,y}^{A}\right)  ,
\end{multline*}
$U_{\mathcal{N}}^{A\rightarrow BE}$ is an isometric extension of the channel
$\mathcal{N}^{A\rightarrow B}$, and $\psi_{x,y}^{A}$ are pure states. The
register $X$ is associated to the public information of the code, and the
register $Y$ is associated to the code's private information. The method for
achieving the above capacity region is to combine the \textquotedblleft
publicly-enhanced private father\textquotedblright\ protocol with the one-time
pad, private-to-public communication, and secret key distribution~\cite{WH10a}%
. The \textquotedblleft publicly-enhanced private father\textquotedblright%
\ protocol exploits shared secret key and the channel to communicate public
and private classical information~\cite{HW09a}.

We can also think about this capacity region from the ensemble point of view.
Let the input ensemble be%
\[
\left\{  p_{X}\left(  x\right)  p_{Y|X}\left(  y|x\right)  ,\psi
_{x,y}\right\}  .
\]
Let $\rho_{x}\equiv\sum_{y}p_{Y|X}\left(  y|x\right)  \psi_{x,y}$ and
$\rho\equiv\sum_{x}p_{X}\left(  x\right)  \rho_{x}$. Then we get the
characterization in the statement of the proposition by substitution.
\end{proof}

The above proposition also gives an achievable rate region if we restrict the
input states to be pure (pure states are sufficient to achieve the boundary of
the region for degradable channels, but a general channel might need an
optimization over all mixed states). Observe that it suffices to consider the
following four entropies in order to calculate the private dynamic capacity
region:%
\begin{align}
&  H\left(  \mathcal{N}\left(  \rho\right)  \right)  ,\label{eq:RPS-entropy-1}%
\\
&  \sum_{x,y}p_{X}\left(  x\right)  p_{Y|X}\left(  y|x\right)  H\left(
\mathcal{N}\left(  \psi_{x,y}\right)  \right)  ,\label{eq:RPS-entropy-2}\\
&  \sum_{x}p_{X}\left(  x\right)  H\left(  \mathcal{N}\left(  \rho_{x}\right)
\right)  ,\label{eq:RPS-entropy-3}\\
&  \sum_{x}p_{X}\left(  x\right)  H\left(  \mathcal{N}^{c}\left(  \rho
_{x}\right)  \right)  . \label{eq:RPS-entropy-4}%
\end{align}

\section{Lossy Bosonic Channel}

\label{sec:lossy-bosonic}Consider now the case of a single-mode lossy bosonic
channel. The transformation that this channel induces on the input
annihilation operators is%
\begin{align}
\hat{a}  &  \rightarrow\sqrt{\eta}\ \hat{a}+\sqrt{1-\eta}\ \hat{e}%
,\label{eq:lossy-bosonic-channel-1}\\
\hat{e}  &  \rightarrow-\sqrt{1-\eta}\ \hat{a}+\sqrt{\eta}\ \hat{e},
\label{eq:lossy-bosonic-channel-2}%
\end{align}
where $\hat{a}$ is the input annihilation operator for the sender, $\hat{e}$
is the input annihilation operator for the environment, and $\eta$ is the
transmissivity of the channel. Let $\mathcal{N}$ denote the Kraus map induced
by this channel, and let $\mathcal{N}^{c}$ denote the complementary channel.
In the case where the environmental input is the vacuum state, the
complementary channel is just a lossy bosonic channel with transmissivity
$1-\eta$~\cite{GSE08}. We place a photon number constraint on the input mode
to the channel, such that the mean number of photons at the input cannot be
greater than $N_{S}$.

\subsection{Quantum Dynamic Achievable Rate Region}

The proof of the theorem below justifies achievability of the region in~(1) in
the main text.

\begin{theorem}
An achievable quantum dynamic region for a lossy bosonic channel with
transmissivity $\eta$ is the union of regions of the form:%
\begin{align}
C+2Q  &  \leq g\left(  \lambda N_{S}\right)  +g\left(  \eta N_{S}\right)
-g\left(  \left(  1-\eta\right)  \lambda N_{S}\right)
,\label{eq:bosonic-region-1}\\
Q+E  &  \leq g\left(  \eta\lambda N_{S}\right)  -g\left(  \left(
1-\eta\right)  \lambda N_{S}\right)  ,\label{eq:bosonic-region-2}\\
C+Q+E  &  \leq g\left(  \eta N_{S}\right)  -g\left(  \left(  1-\eta\right)
\lambda N_{S}\right)  , \label{eq:bosonic-region-3}%
\end{align}
where $\lambda\in\left[  0,1\right]  $ is a photon-number-sharing parameter
and $g\left(  N\right)  $ is the entropy of a thermal state with mean photon
number $N$.
\end{theorem}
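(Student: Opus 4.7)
The plan is to apply Proposition~\ref{prop:CQE-discrete} with a carefully chosen Gaussian-modulated input ensemble and then verify that the four entropies listed in (\ref{eq:triple-trade-off-entropies-1}--\ref{eq:triple-trade-off-entropies-4}) evaluate to the $g$-function expressions on the right-hand sides of (\ref{eq:bosonic-region-1}--\ref{eq:bosonic-region-3}). Concretely, I would take each $\phi_{x}^{AA'}$ to be a two-mode squeezed vacuum state of mean photon number $\lambda N_{S}$ per mode, with the $A'$ mode additionally displaced in phase space by a complex amount $\alpha_{x}$, and I would let $p_{X}$ be an isotropic complex Gaussian over $\alpha_{x}$ with variance $(1-\lambda)N_{S}$. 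Then $\rho_{x} = \mathrm{Tr}_{A}\{\phi_{x}^{AA'}\}$ is a thermal state of mean $\lambda N_{S}$ displaced by $\alpha_{x}$, and the average state $\rho = \sum_{x} p_{X}(x)\rho_{x}$ is a thermal state of mean $N_{S}$. This saturates the photon number constraint.

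The four entropies can then be computed using two key facts: displacements are unitary and so do not change the entropy; and the lossy channel in (\ref{eq:lossy-bosonic-channel-1}--\ref{eq:lossy-bosonic-channel-2}) with vacuum environment takes a thermal input of mean $N$ to a thermal output of mean $\eta N$ at the receiver and mean $(1-\eta)N$ at the environment. Thus $H(\mathcal{N}(\rho)) = g(\eta N_{S})$, $\sum_{x}p_{X}(x)H(\rho_{x}) = g(\lambda N_{S})$, $\sum_{x}p_{X}(x)H(\mathcal{N}(\rho_{x})) = g(\eta\lambda N_{S})$, and $\sum_{x}p_{X}(x)H(\mathcal{N}^{c}(\rho_{x})) = g((1-\eta)\lambda N_{S})$. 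Substituting these four values into (\ref{eq:triple-bosonic-help-1}--\ref{eq:triple-bosonic-help-3}) gives the claimed region, and taking the union over $\lambda \in [0,1]$ yields the full region in the theorem statement.

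The main obstacle is that Proposition~\ref{prop:CQE-discrete} is stated for discrete ensembles and finite-dimensional operators, whereas our ensemble is a continuous Gaussian mixture over an infinite-dimensional Fock space. I would handle this in the standard bosonic-coding manner: first truncate to a large photon-number cutoff and discretize the Gaussian on a fine lattice, apply Proposition~\ref{prop:CQE-discrete} to obtain achievable rates for the truncated ensemble, and then take the truncation and discretization limits, using continuity of $g$ and lower semicontinuity of the von Neumann entropy together with the energy constraint to ensure convergence of the four entropy quantities to their Gaussian-ensemble values. No regularization is needed here because the four entropies above are additive on product inputs of this form, so a single-letter region suffices for achievability.
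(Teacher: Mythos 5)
Your proposal is correct and follows essentially the same route as the paper: the same Gaussian-displaced two-mode squeezed vacuum ensemble with photon-number-sharing parameter $\lambda$, the same evaluation of the four entropies via unitary invariance and channel covariance to $g(\eta N_{S})$, $g(\lambda N_{S})$, $g(\eta\lambda N_{S})$, $g((1-\eta)\lambda N_{S})$, and the same truncation-and-limit argument to bridge from the discrete, finite-dimensional statement of Proposition~\ref{prop:CQE-discrete} (the paper invokes Caratheodory's theorem for the finite alphabet where you propose a lattice discretization, a cosmetic difference).
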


\begin{proof}
We pick an input ensemble of the form in (\ref{eq:code-ensemble}) as follows,
from which we will generate random codes:%
\begin{equation}
\left\{  p_{\overline{\lambda}N_{S}}\left(  \alpha\right)  ,D^{A^{\prime}%
}\left(  \alpha\right)  |\psi_{\text{TMS}}\rangle^{AA^{\prime}}\right\}  .
\label{eq:bosonic-ensemble}%
\end{equation}
The distribution $p_{\overline{\lambda}N_{S}}\left(  \alpha\right)  $ is an
isotropic Gaussian distribution with variance $\overline{\lambda}N_{S}$:%
\begin{equation}
p_{\overline{\lambda}N_{S}}\left(  \alpha\right)  \equiv\frac{1}{\pi
\overline{\lambda}N_{S}}\exp\left\{  -\left\vert \alpha\right\vert
^{2}/\overline{\lambda}N_{S}\right\}  , \label{eq:gaussian-prior}%
\end{equation}
where $\overline{\lambda}\equiv1-\lambda$, $\lambda\in\left[  0,1\right]  $ is
a photon-number-sharing parameter, indicating how many photons to dedicate to
the quantum part of the code, while $\overline{\lambda}$ indicates how many
photons to dedicate to the classical part. In~(\ref{eq:bosonic-ensemble}),
$D^{A^{\prime}}\left(  \alpha\right)  $ is a displacement operator acting on
mode $A^{\prime}$, and $|\psi_{\text{TMS}}\rangle^{AA^{\prime}}$ is a two-mode
squeezed (TMS) vacuum state of the following form~\cite{GK04,BVL05}:%
\begin{equation}
|\psi_{\text{TMS}}\rangle^{AA^{\prime}}\equiv\sum_{n=0}^{\infty}\sqrt
{\frac{\left[  \lambda N_{S}\right]  ^{n}}{\left[  \lambda N_{S}+1\right]
^{n+1}}}\left\vert n\right\rangle ^{A}\left\vert n\right\rangle ^{A^{\prime}}.
\label{eq:two-mode-squeezed}%
\end{equation}
Let $\theta$ denote the state resulting from tracing over the mode $A$:%
\begin{align*}
\theta &  \equiv\text{Tr}_{A}\left\{  |\psi_{\text{TMS}}\rangle\langle
\psi_{\text{TMS}}|^{AA^{\prime}}\right\} \\
&  =\sum_{n=0}^{\infty}\frac{\left[  \lambda N_{S}\right]  ^{n}}{\left[
\lambda N_{S}+1\right]  ^{n+1}}\left\vert n\right\rangle \left\langle
n\right\vert ^{A^{\prime}}.
\end{align*}
Observe that the reduced state $\theta$ is a thermal state with mean photon
number $\lambda N_{S}$~\cite{GK04}. Let $\overline{\theta}$ denote the state
resulting from taking the expectation of the state $\theta$ over the choice of
$\alpha$ with the prior $p_{\overline{\lambda}N_{S}}\left(  \alpha\right)  $:%
\[
\overline{\theta}\equiv\int d\alpha\ p_{\overline{\lambda}N_{S}}\left(
\alpha\right)  \ D\left(  \alpha\right)  \theta D^{\dag}\left(  \alpha\right)
.
\]
The state $\overline{\theta}$ is just a thermal state with mean photon number
$N_{S}$ \cite{GK04}. Thus, the state input to the channel on average has a
mean photon number $N_{S}$, ensuring that we meet the photon number constraint
on the channel input.

It is worth mentioning the two extreme cases of the ensemble in
(\ref{eq:bosonic-ensemble}). When the photon-number-sharing parameter
$\lambda=0$, the ensemble reduces to an ensemble of coherent states with a zero-mean
Gaussian prior of variance $N_{S}$:%
\[
\left\{  p_{N_{S}}\left(  \alpha\right)  ,\left\vert 0\right\rangle
^{A}\otimes\left\vert \alpha\right\rangle ^{A^{\prime}}\right\}  .
\]
This ensemble achieves the classical capacity of the lossy bosonic
channel~\cite{GGLMSY04}. When the photon-number-sharing parameter $\lambda=1$,
the input state is always the two-mode
squeezed state in (\ref{eq:two-mode-squeezed}) with $\lambda=1$, from which random codes
are then constructed. This input
state achieves both the entanglement-assisted classical and quantum capacities
of the lossy bosonic channel\ \cite{BSST01,HW01,GLMS03,GLMS03a}\ and the
channel's quantum capacity \cite{GSE08}. For the latter statement about
quantum capacity, the result holds if the mean input photon number is
sufficiently high (so that $g\left(  \eta N_{S}\right)  -g\left(  \left(
1-\eta\right)  N_{S}\right)  \approx\log_2\left(  \eta\right)  -\log_2\left(
1-\eta\right)  $) and, otherwise, the statement depends a long-standing
minimum-output entropy conjecture \cite{GGLMS04,GSE08,GHLM10}.

We can now calculate the various entropies in
(\ref{eq:triple-trade-off-entropies-1}-\ref{eq:triple-trade-off-entropies-4}).
For our case, they are respectively as follows:%
\begin{align}
&  H\left(  \mathcal{N}\left(  \overline{\theta}\right)  \right)
,\label{eq:bosonic-CQE-entropies-1}\\
&  \int d\alpha\ p_{\overline{\lambda}N_{S}}\left(  \alpha\right)  \ H\left(
D\left(  \alpha\right)  \theta D^{\dag}\left(  \alpha\right)  \right)
,\label{eq:bosonic-CQE-entropies-2}\\
&  \int d\alpha\ p_{\overline{\lambda}N_{S}}\left(  \alpha\right)  \ H\left(
\mathcal{N}\left(  D\left(  \alpha\right)  \theta D^{\dag}\left(
\alpha\right)  \right)  \right)  ,\label{eq:bosonic-CQE-entropies-3}\\
&  \int d\alpha\ p_{\overline{\lambda}N_{S}}\left(  \alpha\right)  \ H\left(
\mathcal{N}^{c}\left(  D\left(  \alpha\right)  \theta D^{\dag}\left(
\alpha\right)  \right)  \right)  . \label{eq:bosonic-CQE-entropies-4}%
\end{align}
The above entropies are straightforward to calculate for the case of the lossy
bosonic channel. We proceed in the above order. The state $\mathcal{N}\left(
\overline{\theta}\right)  $ is a thermal state with mean photon number $\eta
N_{S}$ (the lossy bosonic channel attenuates the mean photon number of the
transmitted thermal state), and so its entropy is%
\[
H\left(  \mathcal{N}\left(  \overline{\theta}\right)  \right)  =g\left(  \eta
N_{S}\right)  ,
\]
Entropy is invariant under the application of a unitary transformation, and so
the second entropy is just%
\begin{align*}
&  \int d\alpha\ p_{\overline{\lambda}N_{S}}\left(  \alpha\right)  \ H\left(
D\left(  \alpha\right)  \theta D^{\dag}\left(  \alpha\right)  \right) \\
&  =\int d\alpha\ p_{\overline{\lambda}N_{S}}\left(  \alpha\right)  \ H\left(
\theta\right) \\
&  =H\left(  \theta\right) \\
&  =g\left(  \lambda N_{S}\right)  .
\end{align*}
Both the channel $\mathcal{N}$ and the complementary channel $\mathcal{N}^{c}$
are covariant with respect to a displacement operator $D\left(  \alpha\right)
$ whenever the input state is thermal. Thus, we can compute the last two
entropies as%
\begin{align}
&  \int d\alpha\ p_{\overline{\lambda}N_{S}}\left(  \alpha\right)  \ H\left(
\mathcal{N}\left(  D\left(  \alpha\right)  \theta D^{\dag}\left(
\alpha\right)  \right)  \right) \nonumber\\
&  =\int d\alpha\ p_{\overline{\lambda}N_{S}}\left(  \alpha\right)  \ H\left(
D\left(  \sqrt{\eta}\alpha\right)  \mathcal{N}\left(  \theta\right)  D^{\dag
}\left(  \sqrt{\eta}\alpha\right)  \right) \nonumber\\
&  =H\left(  \mathcal{N}\left(  \theta\right)  \right) \nonumber\\
&  =g\left(  \eta\lambda N_{S}\right)  , \label{eq:CQE-Bob-entropy}%
\end{align}
and%
\begin{align}
&  \int d\alpha\ p_{\overline{\lambda}N_{S}}\left(  \alpha\right)  \ H\left(
\mathcal{N}^{c}\left(  D\left(  \alpha\right)  \theta D^{\dag}\left(
\alpha\right)  \right)  \right) \nonumber\\
&  =\int d\alpha\ p_{\overline{\lambda}N_{S}}\left(  \alpha\right)  \ H\left(
D\left(  \sqrt{\overline{\eta}}\alpha\right)  \mathcal{N}^{c}\left(
\theta\right)  D^{\dag}\left(  \sqrt{\overline{\eta}}\alpha\right)  \right)
\nonumber\\
&  =H\left(  \mathcal{N}^{c}\left(  \theta\right)  \right) \nonumber\\
&  =g\left(  \left(  1-\eta\right)  \lambda N_{S}\right)  ,
\label{eq:CQE-Eve-entropy}%
\end{align}
where $\overline{\eta}\equiv1-\eta$.

We can now specify our characterization of an achievable rate region for
trade-off communication over the lossy bosonic channel, simply by plugging in
our various entropies into the characterization of the region
in~(\ref{eq:triple-bosonic-help-1}-\ref{eq:triple-bosonic-help-3}). We can
justify this approach by means of a limiting argument similar to that which
appears in Refs.~\cite{GSE07,GSE08}. Suppose that we truncate the Hilbert
space at the channel input so that it is spanned by the Fock number states
$\left\{  \left\vert 0\right\rangle ,\left\vert 1\right\rangle ,\ldots
,\left\vert K\right\rangle \right\}  $ where $K\gg N_{S}$. Thus, all coherent
states, squeezed states, and thermal states become truncated to this
finite-dimensional Hilbert space. Also, it is only necessary to consider an
alphabet $\mathcal{X}$ that is finite because the input Hilbert space is
finite (this follows from Caratheodory's theorem~\cite{DS03,HW08ITIT}).
Applying Proposition~\ref{prop:CQE-discrete} to the ensemble of the form in
(\ref{eq:bosonic-ensemble}) in this truncated Hilbert space gives a quantum
dynamic region which is strictly an inner bound to the region in
(\ref{eq:bosonic-region-1}-\ref{eq:bosonic-region-3}). As we let $K$ grow
without bound, the entropies given by Proposition~\ref{prop:CQE-discrete}
converge to the entropies in (\ref{eq:bosonic-region-1}%
-\ref{eq:bosonic-region-3}). A similar argument applies to the private dynamic
regions and the other regions throughout this paper.
\end{proof}

In order to obtain the full region, we take the union of all the above regions
by varying the photon-number-sharing parameter $\lambda$ from $0$ to $1$. When
$\lambda$ is one, we are dedicating all of the photons to quantum resources
without the intent of sending any classical information. In the other case
when $\lambda$ is equal to zero, we are dedicating all of the photons to
sending classical information without any intent to send quantum information.
Figure~\ref{fig:CQE-bosonic}\ plots this triple trade-off region for
$\eta=3/4$ and $N_{S}=100$.%
\begin{figure}
[ptb]
\begin{center}
\includegraphics[
natheight=4.373400in,
natwidth=7.052600in,
width=3.339in
]%
{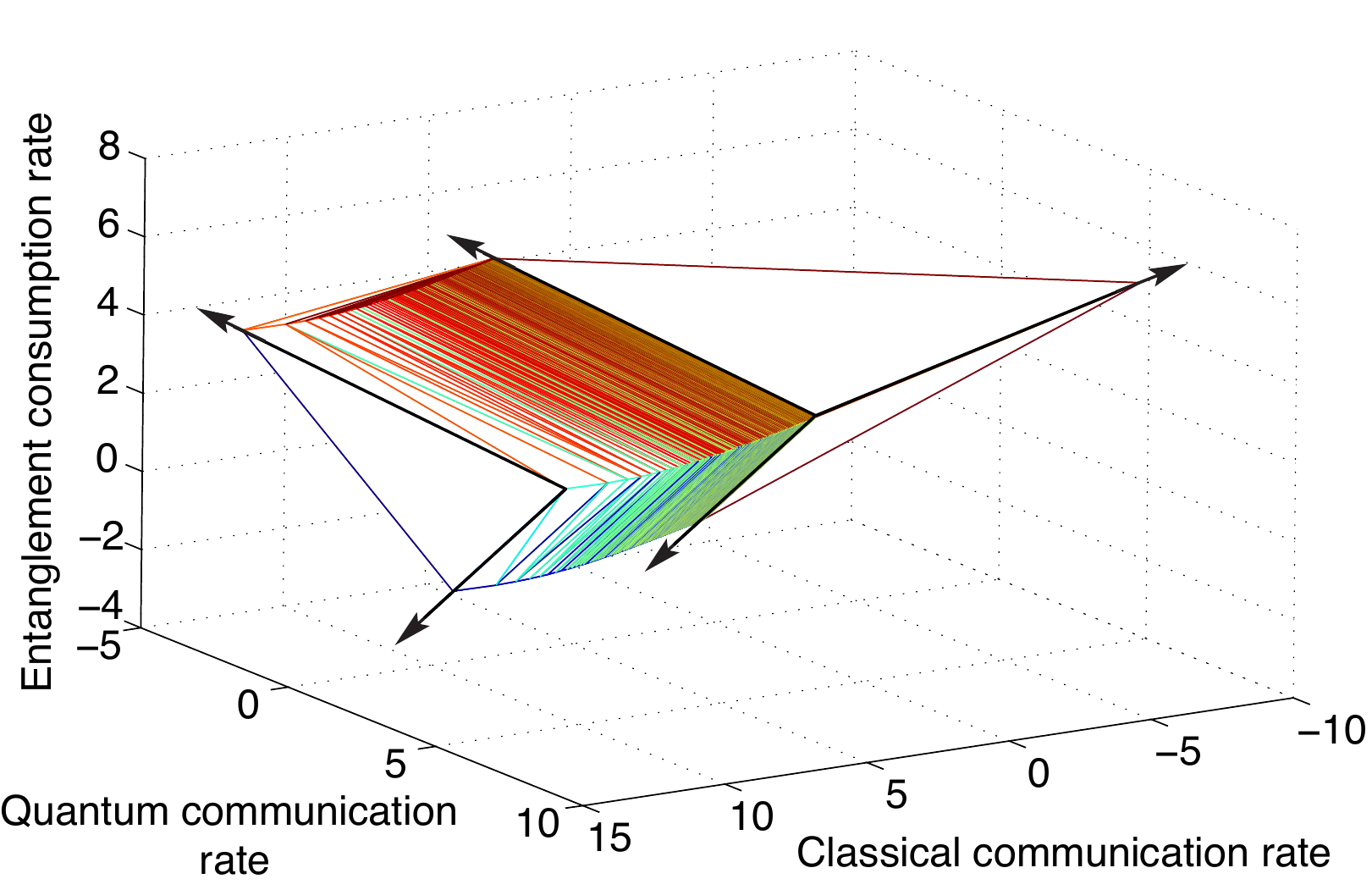}%
\caption{(Color online) The full triple trade-off region for the lossy bosonic channel with
transmissivity $\eta=3/4$ and mean input photon number $N_{S}=100$. Units for classical
communication, quantum communication, and entanglement consumption are bits per channel use,
qubits per channel use, and ebits per channel use, respectively.}%
\label{fig:CQE-bosonic}%
\end{center}
\end{figure}

\subsection{Special Cases of the Quantum Dynamic Trade-off}

\label{sec:cq-trade-off}A first special case of the quantum dynamic achievable
rate region is the trade-off between classical and quantum communication
(first explored by Devetak and Shor in Ref.~\cite{DS03}). The region in
(\ref{eq:bosonic-region-1}-\ref{eq:bosonic-region-3})\ reduces to the
following set of inequalities for this special case:%
\begin{align}
Q  &  \leq g\left(  \eta\lambda N_{S}\right)  -g\left(  \left(  1-\eta\right)
\lambda N_{S}\right)  ,\label{eq:CQ-1}\\
C+Q  &  \leq g\left(  \eta N_{S}\right)  -g\left(  \left(  1-\eta\right)
\lambda N_{S}\right)  . \label{eq:CQ-2}%
\end{align}
Figure~2(a)\ in the main text displays a plot of this region for a lossy
bosonic channel with transmissivity$~\eta=3/4$ and mean input photon
number$~N_{S}=200$. Trade-off coding for this channel can give a dramatic
improvement over a time-sharing strategy. More generally,
Figure~\ref{fig:lossy-bosonic-vary-eta}(a)\ demonstrates that trade-off coding
beats time-sharing for all$~\eta$ such that$~1/2<\eta<1$ (there is no
trade-off for$~\eta\leq1/2$ because the quantum capacity vanishes for these
values of$~\eta$).

Another special case of the quantum dynamic capacity region is when the sender
and receiver share prior entanglement and the sender would like to transmit
classical information to the receiver\ (a trade-off first explored by Shor in
Ref.~\cite{S04}). The region in (\ref{eq:bosonic-region-1}%
-\ref{eq:bosonic-region-3})\ reduces to the following set of inequalities for
this special case:%
\begin{align}
C  &  \leq g\left(  \lambda N_{S}\right)  +g\left(  \eta N_{S}\right)
-g\left(  \left(  1-\eta\right)  \lambda N_{S}\right)  ,\\
C  &  \leq g\left(  \eta N_{S}\right)  -g\left(  \left(  1-\eta\right)
\lambda N_{S}\right)  +E,
\end{align}
where we now take the convention that positive $E$ corresponds to the
consumption of shared entanglement. Figure~2(b)\ in the main text displays a
plot of this region for the case where $\eta=3/4$ and $N_{S}=200$. The figure
demonstrates that trade-off coding can give a dramatic improvement over
time-sharing. More generally, Figure~\ref{fig:lossy-bosonic-vary-eta}%
(b)\ demonstrates that trade-off coding beats time-sharing for all $\eta$ such
that $0<\eta<1$.%
\begin{figure*}
[ptb]
\begin{center}
\includegraphics[
natheight=5.066900in,
natwidth=11.899800in,
height=2.9974in,
width=6.7525in
]%
{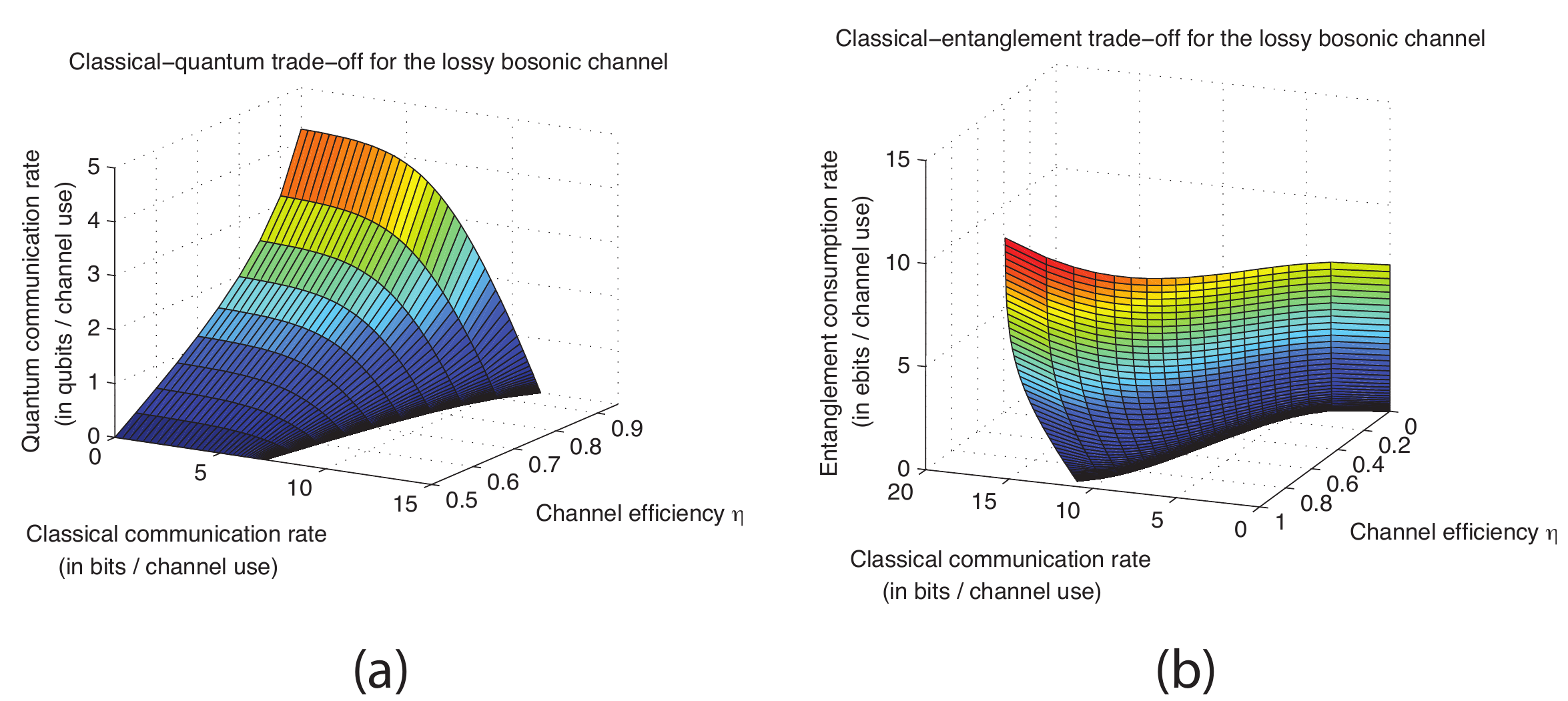}%
\caption{(Color online) (a) The trade-off between classical and quantum communication for all
$\eta\in\left(  1/2,1\right)  $, with the result that trade-off coding always
beats time-sharing by a significant margin. We assume that the mean input
photon number$~N_{S}=50/\left(  1-\eta\right)  $ so that there are a
sufficient number of photons to reach the quantum capacity of $\ln\left(
\eta\right)  -\ln\left(  1-\eta\right)  $ if the trade-off code dedicates all
of the available photons to quantum communication. (b) The trade-off between
entanglement-assisted and unassisted classical communication for all $\eta
\in\left(  0,1\right)  $, with the result that trade-off coding always beats
time-sharing. For consistency with (a), we again assume that $N_{S}=50/\left(
1-\eta\right)  $.}%
\label{fig:lossy-bosonic-vary-eta}%
\end{center}
\end{figure*}

\subsection{The Limit of High Mean Input Photon Number}

\label{sec:wall}We now briefly describe what happens to the above special
cases when the mean input photon number becomes high. We begin with the
classical-quantum trade-off. Recall from Ref.~\cite{HW01}\ that the classical
capacity of the lossy bosonic channel can be infinite if the input photon
number is unlimited. But the quantum capacity is fundamentally limited even if
an infinite number of photons are available~\cite{HW01,WPG07}. The threshold
on quantum capacity for the lossy bosonic channel with transmissivity $\eta$
is
\begin{equation}
\lim_{N_{S}\rightarrow\infty}g\left(  \eta N_{S}\right)  -g\left(  \left(
1-\eta\right)  N_{S}\right)  =\ln\left(  \eta\right)  -\ln\left(
1-\eta\right)  . \label{eq:q-cap-limit}%
\end{equation}
Let $N_{T}$ denote the approximate number of photons needed to reach the above
limit on quantum capacity. This limit has implications for trade-off coding.
Given a large input photon number $N_{S}>N_{T}$, it is possible to exploit an
amount $N_{T}$ for the quantum part of the transmission and $N_{S}-N_{T}$ for
the classical part of the transmission, so that the classical part of the
transmission can become arbitrarily large in the limit of infinite mean input
photon number (this leads to our rule of thumb pointed out in the main text).
Figure~\ref{fig:lossy-bosonic-vary-NS}(a) depicts the classical-quantum
trade-off for a lossy bosonic channel with$~\eta=3/4$ as the mean input photon
number~$N_{S}$ increases on a logarithmic scale from$~0.01$ to$~10^{10}$.%
\begin{figure*}
[ptb]
\begin{center}
\includegraphics[
natheight=4.046500in,
natwidth=8.947400in,
width=6.7525in
]%
{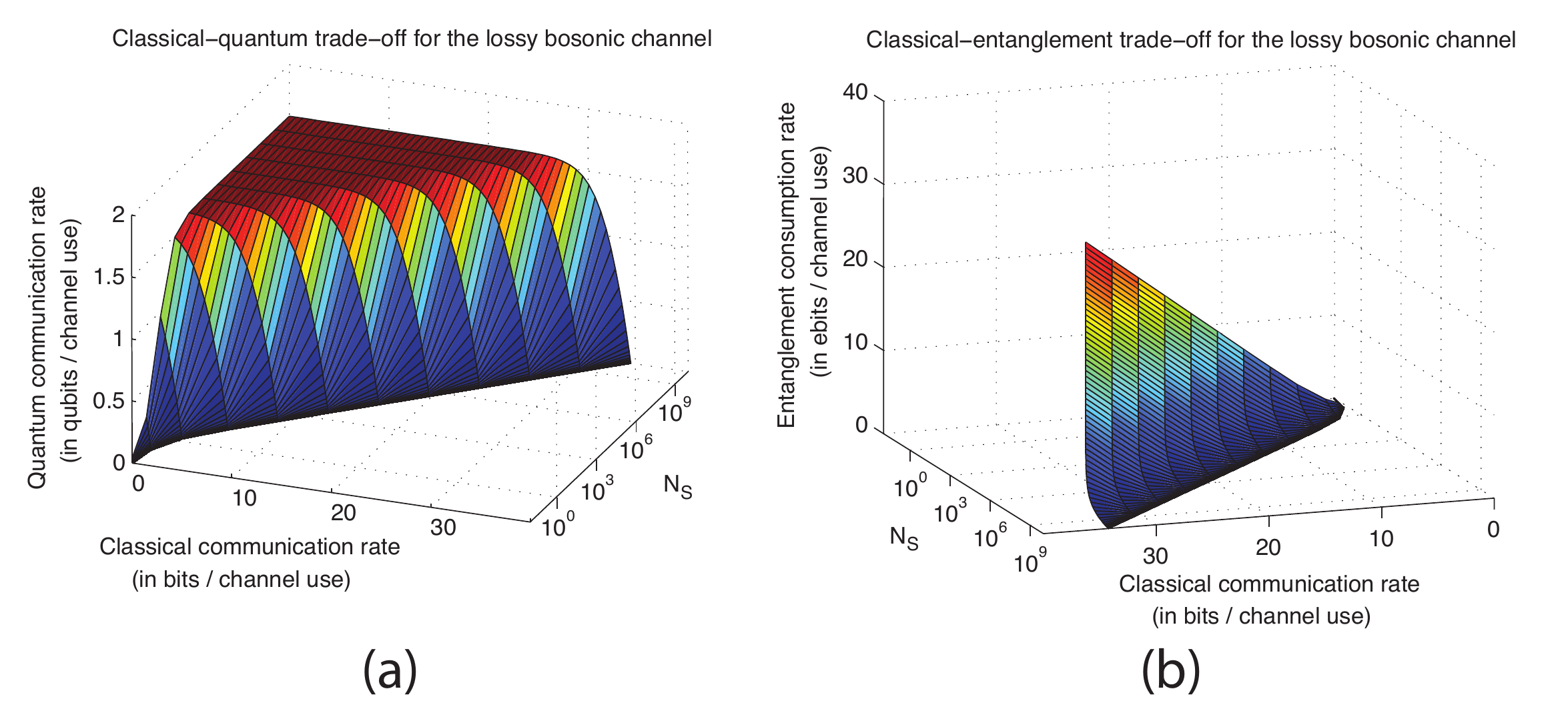}%
\caption{(Color online) (a) The trade-off between classical and quantum communication for a
lossy bosonic channel with transmissivity~$\eta=3/4$ as the mean input photon
number$~N_{S}$ increases on a logarithmic scale from~$0.01$ to~$10^{10}$. The
quantum capacity can never be larger than $\ln\left(  \eta\right)  -\ln\left(
1-\eta\right)  $, but the classical capacity is unbounded as $N_{S}%
\rightarrow\infty$. Thus, the best option for a trade-off code is to abide by
the rule of thumb given in the main text and dedicate only a small fraction
$\lambda\approx50/\left[  \left(  1-\eta\right)  N_{S}\right]  $ of the
available photons to quantum resources. This ensures that the quantum data
rate is near maximal while also maximizing the rate of classical
communication. (b) The trade-off between entanglement-assisted and unassisted
classical communication for the same lossy bosonic channel as $N_{S}$
increases logarithmically from~$0.01$ to~$10^{10}$. Abiding by a similar rule
of thumb and dedicating a fraction $\lambda\approx50/\left[  \left(
1-\eta\right)  N_{S}\right]  $ of the available photons to shared entanglement
ensures that the classical data rate is near maximal while minimizing the
entanglement consumption rate. Observe that trade-off protocols operating at
these near maximal data rates given by the rule of thumb consume about the
same amount of entanglement for all of the trade-off curves depicted.}%
\label{fig:lossy-bosonic-vary-NS}%
\end{center}
\end{figure*}

An interesting effect occurs with the trade-off between assisted and
unassisted classical communication. In the limit of infinite photon number,
the difference between the entanglement-assisted classical capacity and the
classical capacity approaches $\ln\left(  1/\left(  1-\eta\right)  \right)  $:%
\begin{align*}
&  \lim_{N_{S}\rightarrow\infty}g\left(  N_{S}\right)  +g\left(  \eta
N_{S}\right)  -g\left(  \left(  1-\eta\right)  N_{S}\right)  -g\left(  \eta
N_{S}\right) \\
&  =\lim_{N_{S}\rightarrow\infty}g\left(  N_{S}\right)  -g\left(  \left(
1-\eta\right)  N_{S}\right) \\
&  =\lim_{N_{S}\rightarrow\infty}\ln\left(  N_{S}\right)  -\ln\left(  \left(
1-\eta\right)  N_{S}\right) \\
&  =\ln\left(  1/\left(  1-\eta\right)  \right)  .
\end{align*}
Since both capacities diverge, this implies, in particular, that their ratio
approaches one in the same limit. This is an indication that entanglement
becomes less useful in the high photon number limit.
Figure~\ref{fig:lossy-bosonic-vary-NS}(b) depicts the classical-entanglement
trade-off for a lossy bosonic channel with transmissivity $\eta=3/4$ and mean
input photon number~$N_{S}$ increasing on a logarithmic scale from$~0.01$
to$~10^{10}$.

\subsection{``Rules of Thumb'' for Trade-off Coding}

Here we derive two propositions to support the claims in the main text
regarding two different ``rules of thumb'' for trade-off coding.

\begin{lemma}
\label{lem:Taylor}The thermal entropy function $g\left(  N\right)  $ admits
the following power series expansion whenever $N\geq1$:%
\[
g\left(  N\right)  =\ln\left(  N\right)  +1+\sum_{j=1}^{\infty}\frac{\left(
-1\right)  ^{j+1}}{j\left(  j+1\right)  N^{j}}.
\]

\end{lemma}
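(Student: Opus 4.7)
The plan is to reduce the claim to the standard Mercator series $\ln(1+x)=\sum_{k\ge 1}(-1)^{k+1}x^{k}/k$ by first extracting the dominant $\ln N$ behaviour from $g(N)$. Concretely, I would write
\begin{align*}
g(N) &= (N+1)\ln(N+1) - N\ln N \\
&= (N+1)\bigl[\ln N + \ln(1+1/N)\bigr] - N\ln N \\
&= \ln N + (N+1)\ln(1+1/N),
\end{align*}
which isolates the logarithmic term and reduces the lemma to showing that $(N+1)\ln(1+1/N) = 1 + \sum_{j\ge 1}(-1)^{j+1}/[j(j+1)N^{j}]$ for $N\ge 1$.

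Next I would substitute $x=1/N$ into the Mercator series, which is valid because $|x|\le 1$ and $x\neq -1$, obtaining $\ln(1+1/N)=\sum_{k\ge 1}(-1)^{k+1}/(kN^{k})$. Multiplying by $N+1$ and splitting the sum gives
\begin{equation*}
(N+1)\ln(1+1/N) = \sum_{k\ge 1}\frac{(-1)^{k+1}}{k\,N^{k-1}} + \sum_{k\ge 1}\frac{(-1)^{k+1}}{k\,N^{k}}.
\end{equation*}
Reindexing the first sum by $j=k-1$ pulls out the $j=0$ term, giving a $+1$, and leaves $\sum_{j\ge 1}(-1)^{j}/[(j+1)N^{j}]$; the second sum is $\sum_{j\ge 1}(-1)^{j+1}/(jN^{j})$. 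Adding them termwise and combining coefficients through $1/j - 1/(j+1) = 1/[j(j+1)]$ yields exactly the claimed series, completing the formal manipulation.

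The only subtlety is justifying the interchange of summation and the use of the series at the boundary $N=1$. For $N>1$ the series $\sum (-1)^{k+1}/(kN^k)$ is absolutely convergent, so all rearrangements and the multiplication by $N+1$ are standard. At $N=1$ the Mercator series converges only conditionally (to $\ln 2$) by the alternating series test, but term-by-term multiplication by the scalar $N+1=2$ is still valid, and the resulting combined series $\sum (-1)^{j+1}/[j(j+1)]$ is absolutely convergent, so the rearrangement is unproblematic there as well. This boundary case is the only genuine obstacle; everything else is straightforward algebra from the expansion of $\ln(1+x)$.
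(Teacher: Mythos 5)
Your proposal is correct and follows essentially the same route as the paper: isolate $\ln N$ from $g(N)=(N+1)\ln(N+1)-N\ln N$, expand $\ln(1+1/N)$ via the Mercator series, multiply by $N+1$, and recombine coefficients using $1/j-1/(j+1)=1/[j(j+1)]$. Your explicit treatment of the conditionally convergent boundary case $N=1$ is a minor addition in rigor, but the decomposition and key manipulation are the same as in the paper's proof.
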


\begin{proof}
Consider the following chain of equalities:%
\begin{align}
g\left(  N\right)   &  =\left(  N+1\right)  \ln\left(  N+1\right)
-N\ln\left(  N\right) \nonumber\\
&  =\left(  N+1\right)  \ln\left(  1+\frac{1}{N}\right)  -1+\ln\left(
N\right)  +1 \label{eq:taylor-1}%
\end{align}
Now consider the following Taylor series expansion of $\ln\left(
1+1/N\right)  $ which is valid for all $N\geq1$:%
\begin{equation}
\ln\left(  1+\frac{1}{N}\right)  =\sum_{j=1}^{\infty}\frac{\left(  -1\right)
^{j+1}}{jN^{j}}. \label{eq:taylor-3}%
\end{equation}
We can use this expansion to manipulate the expression $\left(  N+1\right)
\ln\left(  1+1/N\right)  -1$:%
\begin{align}
&  \left(  N+1\right)  \left(  \sum_{j=1}^{\infty}\frac{\left(  -1\right)
^{j+1}}{jN^{j}}\right)  -1 =\sum_{j=1}^{\infty}\frac{\left(  -1\right)
^{j+1}}{j\left(  j+1\right)  N^{j}}. \label{eq:taylor-2}%
\end{align}
Combining (\ref{eq:taylor-1}), (\ref{eq:taylor-3}), and (\ref{eq:taylor-2})
gives the statement of the lemma.
\end{proof}

\begin{proposition}
A lower bound on the achievable rate for the lossy bosonic channel is as
follows:%
\[
g\left(  \eta N_{S}\right)  -g\left(  \left(  1-\eta\right)  N_{S}\right)
\geq\ln\left(  \eta\right)  -\ln\left(  1-\eta\right)  -\frac{1}{\eta\left(
1-\eta\right)  N_{S}}.
\]
provided that $\left(  1-\eta\right)  N_{S}\geq2$ and $\eta\geq1-\eta$. Thus,
in a trade-off coding strategy, it suffices to choose the photon-number
sharing parameter $\lambda=1/\left[  \eta\left(  1-\eta\right)  \epsilon
N_{S}\ln2\right]  $ whenever $\lambda\left(  1-\eta\right)  N_{S}\geq2$ and
$\eta\geq1-\eta$ in order to be within $\epsilon$ bits of the quantum capacity
(expressed in units of qubits per channel use).
\end{proposition}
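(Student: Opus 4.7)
The plan is to substitute the power series from Lemma \ref{lem:Taylor} into both $g(\eta N_S)$ and $g((1-\eta)N_S)$, which is legitimate since both arguments are at least $2$ by hypothesis. The $\ln$ pieces subtract to $\ln\eta - \ln(1-\eta)$, the additive constants cancel, and the proposition reduces to showing
$$T \;\equiv\; \sum_{j=1}^{\infty} \frac{(-1)^{j+1}}{j(j+1)} \left[ \frac{1}{(\eta N_S)^{j}} - \frac{1}{((1-\eta) N_S)^{j}} \right] \;\geq\; -\frac{1}{\eta(1-\eta) N_S}.$$
My approach is to isolate the linear (in $1/N_S$) $j=1$ term, which dominates, and to control the remainder by a crude geometric estimate.

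Explicitly, the $j=1$ contribution is $T_1 = (1-2\eta)/[2\eta(1-\eta) N_S]$; the hypothesis $\eta \geq 1-\eta$ makes this non-positive and bounded below by $-1/[2\eta(1-\eta) N_S]$. For the tail $T_2 \equiv T - T_1$, the same hypothesis implies $\left| (\eta N_S)^{-j} - ((1-\eta) N_S)^{-j} \right| \leq ((1-\eta) N_S)^{-j}$, and combining with $1/[j(j+1)] \leq 1/6$ for $j \geq 2$ reduces the triangle-inequality bound to $\tfrac{1}{6}\sum_{j\geq 2} x^{j}$, where $x \equiv 1/[(1-\eta) N_S]$. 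The hypothesis $(1-\eta) N_S \geq 2$ forces $x \leq 1/2$, so $\sum_{j\geq 2} x^{j} = x^2/(1-x) \leq 2x^2$; one further application of $x \leq 1/2$ absorbs the extra $x$ and yields $|T_2| \leq 1/[6(1-\eta) N_S]$.

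Adding the two estimates gives $T \geq -[1/(2\eta) + 1/6]/[(1-\eta) N_S]$, and the closing algebraic step is simply that $1/(2\eta) + 1/6 \leq 1/\eta$ is equivalent to $\eta \leq 3$, which trivially holds. The step that I expect to require the most care is the tail bound: a naive appeal to the alternating-series structure of $T_2$ is \emph{not} uniformly tight as $\eta \to 1$, so I would deliberately take absolute values at the outset and exploit both explicit hypotheses, rather than try to exploit sign cancellation. For the second assertion, I would substitute $N_S \mapsto \lambda N_S$ in the inequality just proved (the conditions $\lambda(1-\eta)N_S \geq 2$ and $\eta \geq 1-\eta$ are exactly what is assumed), so that the coherent information $g(\eta\lambda N_S) - g((1-\eta)\lambda N_S)$ lies within $1/[\eta(1-\eta)\lambda N_S]$ nats of its asymptotic limit $\ln(\eta/(1-\eta))$. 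Converting nats to bits and demanding the gap be at most $\epsilon$ bits amounts to $1/[\eta(1-\eta)\lambda N_S] \leq \epsilon \ln 2$, saturated precisely at the stated $\lambda = 1/[\eta(1-\eta)\epsilon N_S \ln 2]$.
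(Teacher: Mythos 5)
Your proposal is correct and follows essentially the same route as the paper: expand both terms with Lemma~\ref{lem:Taylor}, isolate the exact $j=1$ term (which is $O(1/N_S)$ and controlled using $\eta\geq 1-\eta$), bound the $j\geq 2$ tail via $1/[j(j+1)]\leq 1/6$ and a geometric series using $(1-\eta)N_S\geq 2$, and then obtain the $\lambda$ rule by substituting $N_S\mapsto\lambda N_S$ and converting nats to bits. The only cosmetic difference is that you bound the tail by absolute values (getting $1/[6(1-\eta)N_S]$) whereas the paper keeps only the negative terms (getting $1/[3(1-\eta)N_S]$); both estimates close the argument with the same final bound.
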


\begin{proof}
The first step is to use the expansion from Lemma~\ref{lem:Taylor}.
Substituting in gives%
\begin{align*}
g(\eta N_{S})  &  -g((1-\eta)N_{S})\\
&  \geq\ln\eta-\ln(1-\eta)+\frac{1}{2\eta N_{S}}-\frac{1}{6\eta^{2}N_{S}^{2}%
}+\cdots\\
&  \;\quad-\frac{1}{2(1-\eta)N_{S}}+\frac{1}{6(1-\eta)^{2}N_{S}^{2}}-\cdots\\
&  \geq\ln\eta-\ln(1-\eta)-\frac{2\eta-1}{2\eta(1-\eta)N_{S}}\\
&  \ \ \ \ \ \ -\sum_{j=2}^{\infty}\frac{1}{j(j+1)}\frac{1}{[(1-\eta
)N_{S}]^{j}},
\end{align*}
where in the last line the order $1/N_{S}$ term is exact. The error term was
estimated by keeping only the negative terms in the expansion for orders
$1/N_{S}^{2}$ and higher, in addition to applying the inequality $1-\eta
\leq\eta$. Let $R=(1-\eta)N_{S}$. We can then bound%
\[
\sum_{j=2}^{\infty}\frac{1}{j(j+1)R^{j}}\leq\frac{1}{6}\sum_{j=2}^{\infty
}\frac{1}{R^{j}}=\frac{1}{6(R^{2}-R)}%
\]
using the formula for the sum of a geometric series. So if $R=(1-\eta
)N_{S}\geq2$, then $R\leq R^{2}/2$ and we get that
\begin{align*}
&  g(\eta N_{S})-g((1-\eta)N_{S})\\
&  \geq\ln\eta-\ln(1-\eta)-\frac{2\eta-1}{2\eta(1-\eta)N_{S}}-\frac{1}{3}%
\frac{1}{(1-\eta)N_{S}}\\
&  =\ln\eta-\ln(1-\eta)-\frac{8\eta/3-1}{2\eta(1-\eta)N_{S}}\\
&  \geq\ln\eta-\ln(1-\eta)-\frac{1}{\eta(1-\eta)N_{S}},
\end{align*}
using in the last line that $\eta<1$.

The statement in the proposition about trade-off coding follows by analyzing
the above bound. An achievable rate for quantum data transmission with a
trade-off coding strategy is $g\left(  \lambda\eta N_{S}\right)  -g\left(
\lambda\left(  1-\eta\right)  N_{S}\right)  $, and the above development gives
the following lower bound on this achievable rate (in units of qubits per
channel use):%
\[
\log_{2}\left(  \eta\right)  -\log_{2}\left(  1-\eta\right)  -\frac{1}%
{\eta\left(  1-\eta\right)  \lambda N_{S}\ln2}.
\]
Thus, if we would like to be within $\epsilon$ bits of the maximum quantum
capacity $\log_{2}\left(  \eta\right)  -\log_{2}\left(  1-\eta\right)  $, then
it suffices to choose the photon-number sharing parameter $\lambda$ as given
in the statement of the proposition.
%such that%
%\begin{align*}
%\epsilon &  \geq\frac{1}{\eta\left(  1-\eta\right)  \lambda N_{S}\ln2}\\
%\therefore\lambda &  \geq\frac{1}{\eta\left(  1-\eta\right)  \epsilon N_{S}%
%\ln2}.
%\end{align*}
\end{proof}

\begin{proposition}
An upper bound on the difference between the entanglement-assisted classical
capacity with maximal entanglement and that for limited entanglement trade-off
coding is as follows:%
\[
\frac{5}{6\lambda N_{S}\left(  1-\eta\right)  },
\]
provided that $\lambda\left(  1-\eta\right)  N_{S}\geq2$. Thus, in order to be
within $\epsilon$ bits of the entanglement-assisted classical capacity, it
suffices to choose the photon-number sharing parameter $\lambda=5/\left[
6\epsilon N_{S}\left(  1-\eta\right)  \ln2\right]  $ whenever $\lambda\left(
1-\eta\right)  N_{S}\geq2$.
\end{proposition}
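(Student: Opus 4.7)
The plan is to directly compute the difference
\[
\Delta \equiv \left[g(N_{S}) + g(\eta N_{S}) - g((1-\eta)N_{S})\right] - \left[g(\lambda N_{S}) + g(\eta N_{S}) - g(\lambda(1-\eta)N_{S})\right]
\]
between the entanglement-assisted classical capacity (obtained at $\lambda=1$) and the trade-off rate at a general $\lambda\in[0,1]$, observe that the $g(\eta N_{S})$ terms cancel identically, and control the remaining four-$g$ expression using the power-series expansion supplied by Lemma~\ref{lem:Taylor}.

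First I would apply Lemma~\ref{lem:Taylor} to each of the four surviving thermal entropies. The hypothesis $\lambda(1-\eta)N_{S}\geq 2$ guarantees that every argument of $g$ is at least $2\geq 1$, so all four expansions are valid. The logarithmic and constant pieces arrange themselves so that $g(N_{S})-g((1-\eta)N_{S})$ and $g(\lambda N_{S})-g(\lambda(1-\eta)N_{S})$ each contribute a leading $-\ln(1-\eta)$, and these pieces cancel between the two brackets of $\Delta$. Collecting and factoring the remaining $1/N_{S}^{j}$ contributions yields the clean product form
\[
\Delta = \sum_{j=1}^{\infty}\frac{(-1)^{j+1}}{j(j+1)}\cdot\frac{(1-\lambda^{j})(1-(1-\eta)^{j})}{R^{j}},
\]
where $R\equiv\lambda(1-\eta)N_{S}\geq 2$. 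This factorization is the structural key to the whole argument.

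Next, using $1-\lambda^{j}\leq 1$ and $1-(1-\eta)^{j}\leq 1$ together with the triangle inequality gives $\Delta\leq\sum_{j\geq 1}[j(j+1)R^{j}]^{-1}$. I would split off the $j=1$ term, which contributes $\frac{1}{2R}$, and handle the $j\geq 2$ tail by bounding $\frac{1}{j(j+1)}\leq\frac{1}{6}$ and summing the resulting geometric series to $\frac{1}{6R(R-1)}$. Invoking $R-1\geq R/2$ (valid when $R\geq 2$) gives a tail bound of $\frac{1}{3R^{2}}\leq\frac{1}{3R}$, and combining with the leading term produces $\Delta\leq\frac{1}{2R}+\frac{1}{3R}=\frac{5}{6R}$ in natural-log units, exactly the stated bound. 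Converting from nats to bits introduces a factor $1/\ln 2$, so setting $\frac{5}{6R\ln 2}=\epsilon$ and solving for $\lambda$ yields $\lambda=5/[6\epsilon(1-\eta)N_{S}\ln 2]$, the value asserted in the statement.

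The main obstacle is the bookkeeping required to pass from the four Taylor expansions to the product factorization $(1-\lambda^{j})(1-(1-\eta)^{j})$, since one must carefully track how $1/\lambda^{j}$, $1/(1-\eta)^{j}$, and their product appear across the four expansions. Once this factored form is in hand, the remaining estimates are a routine geometric-series calculation that directly mirrors the corresponding step in the proof of the preceding proposition.
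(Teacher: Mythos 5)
Your proposal is correct and follows essentially the same route as the paper's proof: expand all four surviving $g$'s with Lemma~\ref{lem:Taylor}, note the cancellation of the logarithmic and constant pieces, bound the $j=1$ term by $1/(2R)$ and the $j\geq 2$ tail by $1/(3R)$ via $\frac{1}{j(j+1)}\leq\frac{1}{6}$ and a geometric series with $R=\lambda(1-\eta)N_{S}\geq 2$. Your factorization of the $j$th coefficient as $(1-\lambda^{j})(1-(1-\eta)^{j})/R^{j}$ (which is algebraically correct) is merely a tidier bookkeeping of the same estimates the paper carries out term by term, and it slightly sharpens the tail bound before relaxing back to $5/(6R)$.
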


\begin{proof}
Consider the difference between the entanglement-assisted classical capacity
$g\left(  N_{S}\right)  +g\left(  \eta N_{S}\right)  -g\left(  \left(
1-\eta\right)  N_{S}\right)  $ and the limited entanglement classical data
rate $g\left(  \lambda N_{S}\right)  +g\left(  \eta N_{S}\right)  -g\left(
\left(  1-\eta\right)  \lambda N_{S}\right)  $:%
\begin{align*}
&  g\left(  N_{S}\right)  +g\left(  \eta N_{S}\right)  -g\left(  \left(
1-\eta\right)  N_{S}\right) \\
&  \ \ \ \ \ \ -\left[  g\left(  \lambda N_{S}\right)  +g\left(  \eta
N_{S}\right)  -g\left(  \left(  1-\eta\right)  \lambda N_{S}\right)  \right]
\\
&  =g\left(  N_{S}\right)  -g\left(  \lambda N_{S}\right) \\
&  \ \ \ \ \ \ -\left[  g\left(  \left(  1-\eta\right)  N_{S}\right)
-g\left(  \left(  1-\eta\right)  \lambda N_{S}\right)  \right] \\
&  =\sum_{j=1}^{\infty}\frac{\left(  -1\right)  ^{j+1}}{j\left(  j+1\right)
}\left[  \frac{1}{N_{S}^{j}}-\frac{1}{\left(  \lambda N_{S}\right)  ^{j}%
}\right] \\
&  \ \ \ \ \ -\sum_{j=1}^{\infty}\frac{\left(  -1\right)  ^{j+1}}{j\left(
j+1\right)  N_{S}^{j}}\left[  \frac{1}{\left(  1-\eta\right)  ^{j}}-\frac
{1}{\left(  \lambda\left(  1-\eta\right)  \right)  ^{j}}\right]
\end{align*}
\begin{align*}
&  =\frac{1}{2N_{S}}\left[  1-\lambda^{-1}-\left(  1-\eta\right)
^{-1}+\left[  \lambda\left(  1-\eta\right)  \right]  ^{-1}\right] \\
&  \ \ \ \ \ \ +\sum_{j=2}^{\infty}\frac{\left(  -1\right)  ^{j+1}}{j\left(
j+1\right)  }\left[  \frac{1}{N_{S}^{j}}-\frac{1}{\left(  \lambda
N_{S}\right)  ^{j}}\right] \\
&  \ \ \ \ \ \ -\sum_{j=2}^{\infty}\frac{\left(  -1\right)  ^{j+1}}{j\left(
j+1\right)  N_{S}^{j}}\left[  \frac{1}{\left(  1-\eta\right)  ^{j}}-\frac
{1}{\left(  \lambda\left(  1-\eta\right)  \right)  ^{j}}\right]
\end{align*}
The second equality follows by expanding the thermal entropy functions with
Lemma~\ref{lem:Taylor} and by canceling terms. The third equality follows by
taking out the first term in the summation. Continuing,%
\begin{align*}
&  \leq\frac{1}{2N_{S}}\left(  \frac{\eta}{1-\eta}\right)  \left(
\frac{1-\lambda}{\lambda}\right) \\
&  +\sum_{j=2}^{\infty}\frac{2}{j\left(  j+1\right)  \left(  \lambda\left(
1-\eta\right)  N_{S}\right)  ^{j}}\\
&  \leq\frac{1}{2\lambda N_{S}\left(  1-\eta\right)  }+\sum_{j=2}^{\infty
}\frac{2}{j\left(  j+1\right)  \left(  \lambda\left(  1-\eta\right)
N_{S}\right)  ^{j}}.
\end{align*}
The first inequality follows by realizing that $1-\lambda^{-1}-\left(
1-\eta\right)  ^{-1}+\left[  \lambda\left(  1-\eta\right)  \right]
^{-1}=\left(  \eta/\left(  1-\eta\right)  \right)  \left(  1-\lambda\right)
/\lambda$, by keeping only the positive terms in the series, and by realizing
that $\lambda\left(  1-\eta\right)  N_{S}\leq\left(  1-\eta\right)  N_{S}$,
$\lambda\left(  1-\eta\right)  N_{S}\leq\lambda N_{S}$, and $\lambda\left(
1-\eta\right)  N_{S}\leq N_{S}$. The next inequality follows because
$\eta,1-\lambda\leq1$. Continuing,%
\begin{align*}
&  \leq\frac{1}{2\lambda N_{S}\left(  1-\eta\right)  }+\frac{2}{6}\sum
_{j=2}^{\infty}\frac{1}{\left(  \lambda\left(  1-\eta\right)  N_{S}\right)
^{j}}\\
&  =\frac{1}{2\lambda N_{S}\left(  1-\eta\right)  }+\frac{2}{6}\left(
\frac{1}{\left(  \lambda\left(  1-\eta\right)  N_{S}\right)  ^{2}%
-\lambda\left(  1-\eta\right)  N_{S}}\right)
\end{align*}
\begin{align*}
&  \leq\frac{1}{2\lambda N_{S}\left(  1-\eta\right)  }+\frac{2}{6}\frac
{1}{\lambda\left(  1-\eta\right)  N_{S}}\\
&  =\frac{5}{6\lambda N_{S}\left(  1-\eta\right)  }.
\end{align*}
The first inequality follows because $1/j\left(  j+1\right)  \leq1/6$ for
$j\geq3$. The first equality follows from the formula for a geometric series.
The second inequality is true whenever $\lambda\left(  1-\eta\right)
N_{S}\geq2$, and the final equality follows from simple addition.

The statement about trade-off coding follows from the above bound on the
difference between the entanglement-assisted classical capacity and the
limited entanglement data rate. If we would like the error to be within
$\epsilon~$bits of capacity, then the bound in the statement of the
proposition should hold.
%%
%\begin{align*}
%\epsilon &  \geq\frac{5}{6\lambda\left(  1-\eta\right)  N_{S}\ln2}\\
%\therefore\lambda &  \geq\frac{5}{6\epsilon\left(  1-\eta\right)  N_{S}\ln2}.
%\end{align*}
\end{proof}

\subsection{Private Dynamic Achievable Rate Region}

The proof of the theorem below justifies achievability of the region in~(3) in
the main text.

\begin{theorem}
An achievable private dynamic region for a lossy bosonic channel with
transmissivity $\eta$ is the union of regions of the form:%
\begin{align}
R+P  &  \leq g\left(  \eta N_{S}\right)  ,\label{eq:private-dynamic-1}\\
P+S  &  \leq g\left(  \lambda\eta N_{S}\right)  -g\left(  \lambda\left(
1-\eta\right)  N_{S}\right)  ,\label{eq:private-dynamic-2}\\
R+P+S  &  \leq g\left(  \eta N_{S}\right)  -g\left(  \lambda\left(
1-\eta\right)  N_{S}\right)  . \label{eq:private-dynamic-3}%
\end{align}
where $\lambda\in\left[  0,1\right]  $ is a photon-number-sharing parameter
and $g\left(  N\right)  $ is defined in (2).
\end{theorem}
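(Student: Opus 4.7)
The plan is to parallel the achievability proof just given for the quantum dynamic region, invoking Proposition~\ref{prop:RPS-dynamic} on an ensemble of coherent states modulated by two independent isotropic Gaussian layers. Concretely, I would choose the ensemble
\[
\bigl\{\, p_{\bar\lambda N_S}(\alpha)\, p_{\lambda N_S}(\beta),\ \vert\alpha+\beta\rangle^{A'}\,\bigr\},
\]
with $\alpha$ playing the role of the public index $x$ and $\beta$ playing the role of the private index $y$, where $\bar\lambda\equiv 1-\lambda$ and both Gaussian priors have the form in~(\ref{eq:gaussian-prior}) with variances $\bar\lambda N_S$ and $\lambda N_S$ respectively. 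A short computation using the Glauber--Sudarshan $P$-representation shows that the conditional average $\rho_\alpha\equiv\int d\beta\, p_{\lambda N_S}(\beta)\,\vert\alpha+\beta\rangle\langle\alpha+\beta\vert$ is a thermal state of mean photon number $\lambda N_S$ displaced by $\alpha$, so the overall channel input $\rho\equiv\int d\alpha\, p_{\bar\lambda N_S}(\alpha)\,\rho_\alpha$ is a thermal state with mean photon number $\bar\lambda N_S+\lambda N_S = N_S$, meeting the photon-number constraint.

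Next I would evaluate the four entropies in (\ref{eq:RPS-entropy-1}--\ref{eq:RPS-entropy-4}) for this ensemble. The first entropy $H(\mathcal{N}(\rho))$ equals $g(\eta N_S)$ since the lossy channel maps a thermal state to a thermal state of attenuated mean. The second entropy vanishes because every pure input $\vert\alpha+\beta\rangle$ is mapped by $\mathcal{N}$ to the coherent state $\vert\sqrt\eta(\alpha+\beta)\rangle$, which is pure. For the third and fourth entropies, I would apply the displacement covariance of $\mathcal{N}$ and $\mathcal{N}^c$ exactly as in (\ref{eq:CQE-Bob-entropy})--(\ref{eq:CQE-Eve-entropy}): the $\alpha$-average collapses because displacements on the input produce displacements on the output that leave the entropy unchanged, yielding $g(\lambda\eta N_S)$ and $g(\lambda(1-\eta)N_S)$ respectively. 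Substituting these four values into the three inequalities of Proposition~\ref{prop:RPS-dynamic} reproduces (\ref{eq:private-dynamic-1}--\ref{eq:private-dynamic-3}) termwise.

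Two points warrant attention rather than constituting serious obstacles. First, Proposition~\ref{prop:RPS-dynamic} is formulated for degradable channels, and the lossy bosonic channel is degradable precisely when $\eta\geq 1/2$; in the antidegradable regime $\eta<1/2$ the private capacity vanishes and the region is trivially achievable, so the claimed inner bound holds throughout $\eta\in[0,1]$. Second, since Proposition~\ref{prop:RPS-dynamic} is stated in the finite-dimensional setting, I would bridge to the infinite-dimensional Fock space via the same truncation-and-limit device invoked at the end of the previous achievability proof: project the input Hilbert space onto $\mathrm{span}\{\vert 0\rangle,\ldots,\vert K\rangle\}$ with $K\gg N_S$, apply the proposition there to obtain a strict inner bound, and then let $K\to\infty$ so that the discrete entropies converge to the closed-form $g$-expressions. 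The main effort is in these two bookkeeping steps; the core computation is routine once the double-layered Gaussian-coherent-state ensemble is identified.
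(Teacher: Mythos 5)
Your proposal is correct and follows essentially the same route as the paper's proof: the same two-layer Gaussian coherent-state ensemble $\{p_{\bar\lambda N_S}(\alpha)p_{\lambda N_S}(\beta),|\alpha+\beta\rangle\}$, the same evaluation of the four entropies in (\ref{eq:RPS-entropy-1}--\ref{eq:RPS-entropy-4}) as $g(\eta N_S)$, $0$, $g(\lambda\eta N_S)$, and $g(\lambda(1-\eta)N_S)$ via displacement covariance, and the same caveats (restricting the application of Proposition~\ref{prop:RPS-dynamic} to the degradable regime $\eta\geq 1/2$ and handling the infinite-dimensional input by the truncation-and-limit argument). The paper treats these last two points with the same brevity you do, so no gap distinguishes your argument from theirs.
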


\begin{proof}
This channel is degradable whenever $\eta\geq1/2$ \cite{GSE08}, and
antidegradable otherwise. For simplicity, we study the case where the channel
is degradable (Lemma~5 of Ref.~\cite{WH10a}\ demonstrates that the region is
somewhat trivial for the case of an antidegradable channel). We choose the
input ensemble to be a mixture of coherent states:%
\begin{equation}
\left\{  p_{\overline{\lambda}N_{S}}\left(  \alpha\right)  p_{\lambda N_{S}%
}\left(  \beta\right)  ,\left\vert \alpha+\beta\right\rangle \right\}  ,
\label{eq:ensemble-public-private}%
\end{equation}
where the distributions $p_{\overline{\lambda}N_{S}}\left(  \alpha\right)  $
and $p_{\lambda N_{S}}\left(  \beta\right)  $ are isotropic Gaussian priors of
the form in (\ref{eq:gaussian-prior}). The parameter $\lambda$ is again a
photon-number-sharing parameter where $\overline{\lambda}=1-\lambda$ is the
fraction of photons that the code dedicates to public resources and $\lambda$
is the fraction that it dedicates to private resources. Let$~\theta_{\alpha}$
denote the state resulting from averaging over the variable$~\beta$:%
\begin{align}
\theta_{\alpha}  &  \equiv\int d\beta\ p_{\lambda N_{S}}\left(  \beta\right)
\ \left\vert \alpha+\beta\right\rangle \left\langle \alpha+\beta\right\vert
\nonumber\\
&  =D\left(  \alpha\right)  \theta D^{\dag}\left(  \alpha\right)  ,
\label{eq:private-dynamic-conditional-state}%
\end{align}
where $\theta$ is a thermal state with mean photon number $\lambda N_{S}$. Let
$\overline{\theta}$ denote the state resulting from averaging over all states
in the ensemble:%
\begin{align}
\overline{\theta}  &  \equiv\int\int d\alpha\ d\beta\ p_{\overline{\lambda
}N_{S}}\left(  \alpha\right)  p_{\lambda N_{S}}\left(  \beta\right)
\ \left\vert \alpha+\beta\right\rangle \left\langle \alpha+\beta\right\vert
\nonumber\\
&  =\int d\alpha\ p_{\overline{\lambda}N_{S}}\left(  \alpha\right)  \ D\left(
\alpha\right)  \theta D^{\dag}\left(  \alpha\right)  .
\label{eq:private-dynamic-unconditional-state}%
\end{align}
Observe that $\overline{\theta}$ is just a thermal state with mean photon
number $N_{S}$, so that the mean number of photons entering the channel meets
the constraint of $N_{S}$.

We remark on the two extreme cases of the ensemble in
(\ref{eq:ensemble-public-private}). If the photon-number-sharing parameter
$\lambda=0$, then the coding scheme devotes all of its photons to public
classical communication. The ensemble is an isotropic distribution of coherent
states, which is the ensemble needed to achieve the capacity of the lossy
bosonic channel for public classical communication~\cite{GGLMSY04}. If the
photon-number-sharing parameter $\lambda=1$, then the coding scheme devotes
all of its photons to private classical communication. The ensemble is again
an isotropic mixture of coherent states, which is the ensemble needed to
achieve the private classical capacity of the lossy bosonic
channel~\cite{GSE08}, up to the aforementioned minimum-output entropy
conjecture~\cite{GGLMS04,GHLM10}.

The four entropies in (\ref{eq:RPS-entropy-1}-\ref{eq:RPS-entropy-4}) become
the following four entropies for our case:%
\begin{align}
&  H\left(  \mathcal{N}\left(  \overline{\theta}\right)  \right)  ,\\
&  \int\int d\alpha\ d\beta\ p_{\overline{\lambda}N_{S}}\left(  \alpha\right)
p_{\left(  \lambda\right)  N_{S}}\left(  \beta\right)  \ H\left(
\mathcal{N}\left(  \left\vert \alpha+\beta\right\rangle \left\langle
\alpha+\beta\right\vert \right)  \right)
,\label{eq:private-conditional-coherent-entropy}\\
&  \int d\alpha\ p_{\overline{\lambda}N_{S}}\left(  \alpha\right)  \ H\left(
\mathcal{N}\left(  \theta_{\alpha}\right)  \right)  ,\\
&  \int d\alpha\ p_{\overline{\lambda}N_{S}}\left(  \alpha\right)  \ H\left(
\mathcal{N}^{c}\left(  \theta_{\alpha}\right)  \right)  .
\end{align}
The first entropy is equal to the entropy of an attenuated thermal state:%
\[
H\left(  \mathcal{N}\left(  \overline{\theta}\right)  \right)  =g\left(  \eta
N_{S}\right)  .
\]
The second entropy is equal to zero because a lossy bosonic channel does not
change the purity of a coherent state. We calculate the final two entropies in
the same way as we did in (\ref{eq:CQE-Bob-entropy}) and
(\ref{eq:CQE-Eve-entropy}), respectively.
\end{proof}

Figure~\ref{fig:RPS-region}\ plots the private dynamic capacity region for a
lossy bosonic channel with transmissivity $\eta=3/4$ and the mean input photon
number $N_{S}=100$.%
\begin{figure}
[ptb]
\begin{center}
\includegraphics[
natheight=4.760800in,
natwidth=6.473100in,
width=3.4411in
]%
{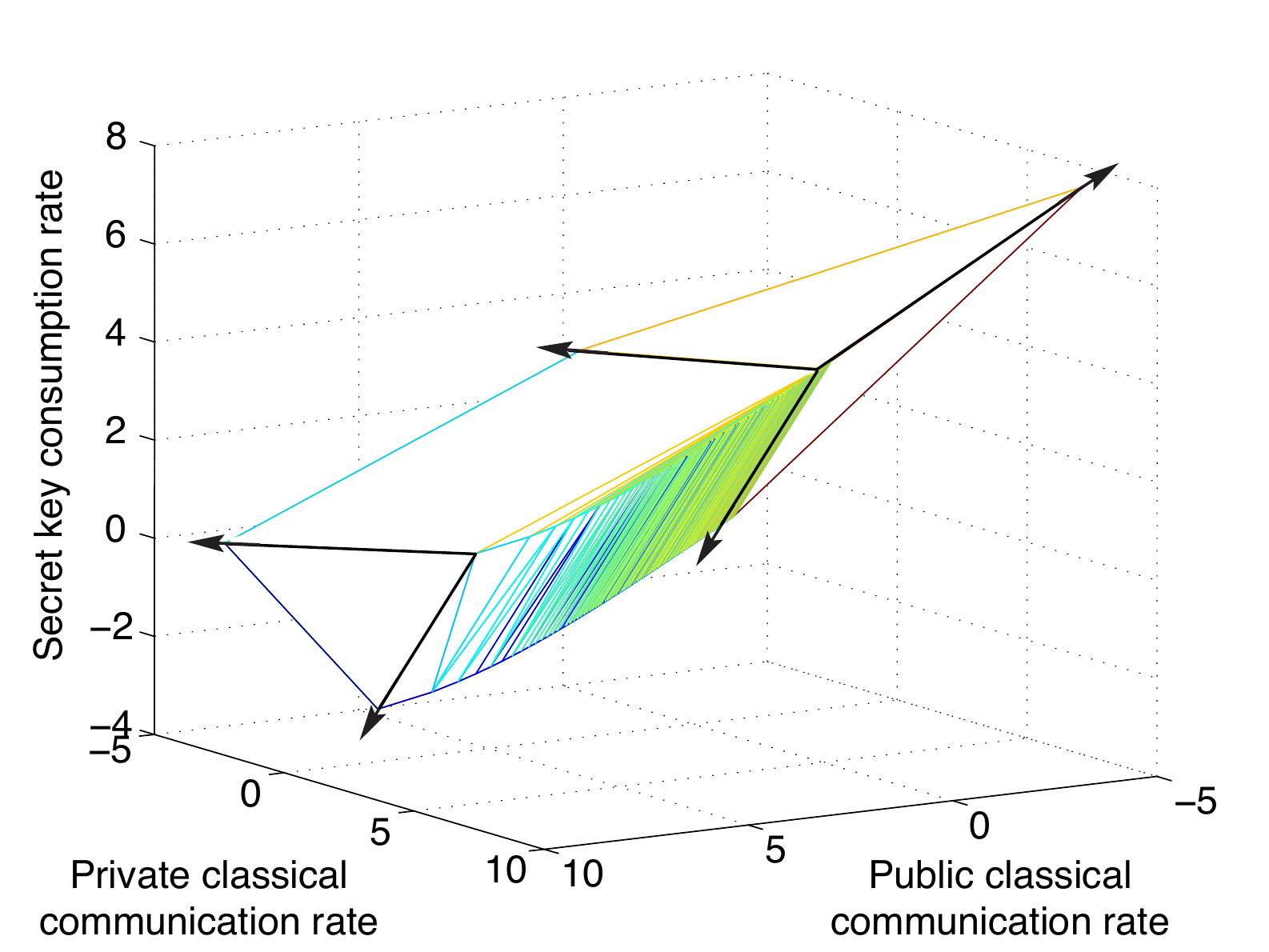}%
\caption{(Color online) The private dynamic capacity region for a lossy bosonic channel with
transmissivity $\eta=3/4$ and mean input photon number $N_{S}=100$. The units for each kind of resource are bits per channel use.} 
\label{fig:RPS-region}%
\end{center}
\end{figure}

\subsection{Special Case of the Private Dynamic Region}

An interesting special case of the private dynamic region is the trade-off
between public and private classical communication. Lemma~3 of
Ref.~\cite{WH10a} proves that the classical-quantum trade-off from
Appendix~\ref{sec:cq-trade-off}\ is the same as the public-private trade-off
whenever the channel is degradable (recall that the lossy bosonic channel is
degradable whenever $\eta\geq1/2$). Thus, the formulas in (\ref{eq:CQ-1}%
-\ref{eq:CQ-2}) characterize this trade-off as%
\begin{align}
P  &  \leq g\left(  \eta\lambda N_{S}\right)  -g\left(  \left(  1-\eta\right)
\lambda N_{S}\right)  ,\\
R+P  &  \leq g\left(  \eta N_{S}\right)  -g\left(  \left(  1-\eta\right)
\lambda N_{S}\right)  .
\end{align}
and Figure~2(a)\ serves as a plot of it. We should note that the trade-off
between public classical communication and secret key generation is the same
as that between public and private classical communication, found simply by
replacing $P$ with $S$ in the above formulas.

\subsection{Converse for the Quantum Dynamic Region}

We now prove that the characterization in (\ref{eq:bosonic-region-1}%
-\ref{eq:bosonic-region-3}) is the capacity region corresponding to the
trade-off between classical communication, quantum communication, and
entanglement. Our converse proof exploits the same ideas used in
Refs.~\cite{GSE07,G08} to prove optimality of the bosonic broadcast channel
region and, as such, it is optimal only if the minimum output entropy
conjecture is true (in particular, the second strong version from
Refs.~\cite{GSE07,G08}). Some evidence has been collected suggesting that this
conjecture should be true, but a full proof remains elusive.
Refs.~\cite{GSE07,G08}, however, have shown that the conjecture is true if the
input states are restricted to be Gaussian, and thus our region is optimal if
the input states are restricted to be Gaussian.

\begin{proof}
Proposition~\ref{prop:CQE-discrete} states that the regularization of the
region in (\ref{eq:triple-bosonic-help-1}-\ref{eq:triple-bosonic-help-3}) is
the quantum dynamic capacity region for any quantum channel. We prove here
that the region in (\ref{eq:bosonic-region-1}-\ref{eq:bosonic-region-3}) is
equivalent to the capacity region for a lossy bosonic channel $\mathcal{N}$
with transmissivity parameter $\eta>1/2$ and mean input photon number$~N_{S}$,
up to a minimum output entropy conjecture. We do so by proving the following
upper bounds:%
\begin{align}
\sum_{x}p_{X}\left(  x\right)  H\left(  \rho_{x}\right)   &  \leq ng\left(
\lambda N_{S}\right)  ,\label{eq:CQE-conv-1}\\
H\left(  \mathcal{N}^{\otimes n}\left(  \rho\right)  \right)   &  \leq
ng\left(  \eta N_{S}\right)  ,\label{eq:CQE-conv-2}\\
\sum_{x}p_{X}\left(  x\right)  H\left(  \mathcal{N}^{\otimes n}\left(
\rho_{x}\right)  \right)   &  \leq ng\left(  \eta\lambda N_{S}\right)  ,
\label{eq:CQE-conv-3}%
\end{align}
and the following lower bound:%
\begin{equation}
\sum_{x}p_{X}\left(  x\right)  H\left(  \left(  \mathcal{N}^{c}\right)
^{\otimes n}\left(  \rho_{x}\right)  \right)  \geq ng\left(  \left(
1-\eta\right)  \lambda N_{S}\right)  , \label{eq:CQE-conv-4}%
\end{equation}
so that for all $n$-letter ensembles $\left\{  p_{X}\left(  x\right)
,\rho_{x}\right\}  $ with $\rho_{x}\in\mathcal{B}\left(  \mathcal{H}^{\otimes
n}\right)  $ and $\rho\equiv\sum_{x}p_{X}\left(  x\right)  \rho_{x}$, there
exists some $\lambda\in\left[  0,1\right]  $ such that the above bounds hold.
The above bounds immediately imply that the region in
(\ref{eq:bosonic-region-1}-\ref{eq:bosonic-region-3}) is the quantum dynamic
capacity region.

The second bound in (\ref{eq:CQE-conv-2}) follows because the quantum entropy
is subadditive and the entropy of a thermal state gives the maximum entropy
for a bosonic state with mean photon number $\eta N_{S}$ (if the input mean
photon number is $N_{S}$, then the output mean photon number is $\eta N_{S}$
for a lossy bosonic channel with transmissivity~$\eta$).

Recall that the thermal entropy function $g\left(  x\right)  $ is
monotonically increasing and concave in its input argument. The proof from
Refs.~\cite{GSE07,G08} makes use of these facts and we can do so as well. Let
us begin by bounding the term in (\ref{eq:CQE-conv-1}). Supposing that the
mean number of photons for the $j^{\text{th}}$ symbol of $\rho_{x}$ is
$N_{S,x_{j}}$, we have the following bound:%
\begin{align}
0  &  \leq H\left(  \rho_{x}\right) \label{eq:CQE-conv-Alice-bound-1}\\
&  \leq\sum_{j=1}^{n}H\left(  \rho_{x}^{j}\right) \\
&  \leq\sum_{j=1}^{n}g\left(  N_{S,x_{j}}\right) \\
&  \leq ng\left(  N_{S,x}\right)  , \label{eq:CQE-conv-Alice-bound-4}%
\end{align}
where $N_{S,x}\equiv\sum_{j=1}^{n}\frac{1}{n}N_{S,x_{j}}$. The second
inequality exploits the subadditivity of quantum entropy, the third inequality
exploits the fact that the maximum quantum entropy of a bosonic system with
mean photon number~$N$ is $g\left(  N\right)  $, and the last inequality
exploits concavity of $g\left(  x\right)  $. Thus, for all $x\in\mathcal{X}$
there exists some $\lambda_{x}^{\prime}\in\left[  0,1\right]  $ such that%
\begin{equation}
H\left(  \rho_{x}\right)  =ng\left(  \lambda_{x}^{\prime}N_{S,x}\right)  ,
\label{eq:CQE-Alice-ind-ent}%
\end{equation}
because $g\left(  x\right)  $ is a monotonically increasing function of $x$
for $x\geq0$. Also, we have that%
\begin{align}
0  &  \leq\sum_{x}p_{X}\left(  x\right)  H\left(  \rho_{x}\right)
\label{eq:CQE-conv-Alice-sec-bound-1}\\
&  \leq H\left(  \rho\right) \\
&  \leq ng\left(  N_{S}\right)  , \label{eq:CQE-conv-Alice-sec-bound-3}%
\end{align}
where the second inequality follows from concavity of quantum entropy, and the
last follows from the fact that the maximum entropy for a bosonic state of
mean photon number$~N$ is $g\left(  N\right)  $. Thus, there exists some
$\lambda^{\prime}\in\left[  0,1\right]  $ such that%
\begin{equation}
\sum_{x}p_{X}\left(  x\right)  H\left(  \rho_{x}\right)  =ng\left(
\lambda^{\prime}N_{S}\right)  , \label{eq:CQE-conv-Alice-ent}%
\end{equation}
because $g\left(  x\right)  $ is a monotonically increasing function of $x$
for $x\geq0$. We then have that%
\begin{equation}
\sum_{x}p_{X}\left(  x\right)  g\left(  \lambda_{x}^{\prime}N_{S,x}\right)
=g\left(  \lambda^{\prime}N_{S}\right)  , \label{eq:CQE-relate-Alice-ents}%
\end{equation}
by combining (\ref{eq:CQE-Alice-ind-ent}) and (\ref{eq:CQE-conv-Alice-ent}).
Alice can simulate Bob's state by passing her system through one input port of
a beamsplitter with transmissivity $\eta$ while passing the vacuum through the
other port. Assuming the truth of Strong Conjecture~2 from
Refs.~\cite{GSE07,G08}, we have that%
\[
H\left(  \mathcal{N}^{\otimes n}\left(  \rho_{x}\right)  \right)  \geq
ng\left(  \lambda_{x}^{\prime}\eta N_{S,x}\right)  .
\]
(We should point out that Strong Conjecture~2 holds if the entropy
photon-number inequality is true \cite{G08}). Using the relation in
(\ref{eq:CQE-relate-Alice-ents}) and concavity of $g\left(  x\right)  $, we
can apply a slightly modified version of Corollary~A.4 from Guha's thesis
\cite{G08} to show that%
\[
\sum_{x}p_{X}\left(  x\right)  g\left(  \lambda_{x}^{\prime}\eta
N_{S,x}\right)  \geq g\left(  \lambda^{\prime}\eta N_{S}\right)  ,
\]
giving the lower bound%
\begin{equation}
\sum_{x}p_{X}\left(  x\right)  H\left(  \mathcal{N}^{\otimes n}\left(
\rho_{x}\right)  \right)  \geq g\left(  \lambda^{\prime}\eta N_{S}\right)  .
\label{eq:CQE-Bob-lower-bnd}%
\end{equation}
(Corollary~A.4 of Ref.~\cite{G08} is stated for a uniform distribution, but
the argument only relies on a concavity argument and thus applies to an
arbitrary distribution.)

With a similar development as in (\ref{eq:CQE-conv-Alice-bound-1}%
-\ref{eq:CQE-conv-Alice-bound-4}), we can bound the entropy $H\left(
\mathcal{N}^{\otimes n}\left(  \rho_{x}\right)  \right)  $ because the mean
number of photons for the $j^{\text{th}}$ symbol of $\mathcal{N}^{\otimes
n}\left(  \rho_{x}\right)  $ is $\eta N_{S,x_{j}}$:%
\begin{align*}
0  &  \leq H\left(  \mathcal{N}^{\otimes n}\left(  \rho_{x}\right)  \right) \\
&  \leq\sum_{j=1}^{n}H\left(  \mathcal{N}^{\otimes n}\left(  \rho_{x}%
^{j}\right)  \right) \\
&  \leq\sum_{j=1}^{n}g\left(  \eta N_{S,x_{j}}\right) \\
&  \leq ng\left(  \eta N_{S,x}\right)  .
\end{align*}
Thus, for all $x\in\mathcal{X}$ there exists some $\lambda_{x}\in\left[
0,1\right]  $ such that%
\begin{equation}
H\left(  \mathcal{N}^{\otimes n}\left(  \rho_{x}\right)  \right)  =ng\left(
\lambda_{x}\eta N_{S,x}\right)  , \label{eq:CQE-conv-Bob-ind-bounds}%
\end{equation}
because $g\left(  x\right)  $ is a monotonically increasing function of $x$
for $x\geq0$. We also have that%
\begin{align*}
0  &  \leq\sum_{x}p_{X}\left(  x\right)  H\left(  \mathcal{N}^{\otimes
n}\left(  \rho_{x}\right)  \right) \\
&  \leq H\left(  \mathcal{N}^{\otimes n}\left(  \rho\right)  \right) \\
&  \leq ng\left(  \eta N_{S}\right)  ,
\end{align*}
for reasons similar to those in (\ref{eq:CQE-conv-Alice-sec-bound-1}%
-\ref{eq:CQE-conv-Alice-sec-bound-3}). Thus, there exists some $\lambda
\in\left[  0,1\right]  $ such that%
\begin{equation}
\sum_{x}p_{X}\left(  x\right)  H\left(  \mathcal{N}^{\otimes n}\left(
\rho_{x}\right)  \right)  =ng\left(  \lambda\eta N_{S}\right)  ,
\label{eq:CQE-Bob-entropy-1}%
\end{equation}
because $g\left(  x\right)  $ is a monotonically increasing function of $x$
for $x\geq0$. This gives us our third bound in (\ref{eq:CQE-conv-3}).
Combining (\ref{eq:CQE-Bob-entropy-1}) and (\ref{eq:CQE-Bob-lower-bnd}) gives
us the following bound%
\[
g\left(  \lambda\eta N_{S}\right)  \geq g\left(  \lambda^{\prime}\eta
N_{S}\right)  ,
\]
which in turn implies that%
\[
g\left(  \lambda N_{S}\right)  \geq g\left(  \lambda^{\prime}N_{S}\right)  ,
\]
because $g\left(  x\right)  $ and its inverse $g^{-1}\left(  y\right)  $
(defined on positive reals) are both monotonically increasing. This gives us
our first bound in (\ref{eq:CQE-conv-1}) by combining with
(\ref{eq:CQE-conv-Alice-ent}).

Combining (\ref{eq:CQE-conv-Bob-ind-bounds}) and (\ref{eq:CQE-Bob-entropy-1}),
we have%
\[
\sum_{x}p_{X}\left(  x\right)  ng\left(  \lambda_{x}N_{S,x}\right)  =ng\left(
\eta\lambda N_{S}\right)  .
\]
We are assuming that the lossy bosonic channel has $\eta\geq1/2$ so that it is
degradable and Bob can simulate Eve's state by passing his state through one
input port of a beamsplitter with transmissivity $\left(  1-\eta\right)
/\eta$ while passing the vacuum through the other port. Assuming the truth of
Strong Conjecture~2 from Refs.~\cite{GSE07,G08}, we have that%
\[
H\left(  \left(  \mathcal{N}^{c}\right)  ^{\otimes n}\left(  \rho_{x}\right)
\right)  \geq ng\left(  \lambda_{x}\left(  1-\eta\right)  N_{S,x}\right)  .
\]
Using the above relation, concavity of $g\left(  x\right)  $, and the fact
that $\eta\geq1/2$, we can apply the modified version of Corollary~A.4 from
Ref.~\cite{G08} to show that%
\[
\sum_{x}p_{X}\left(  x\right)  g\left(  \lambda_{x}\left(  1-\eta\right)
N_{S,x}\right)  \geq g\left(  \lambda\left(  1-\eta\right)  N_{S}\right)  .
\]
This gives our final bound in (\ref{eq:CQE-conv-4}):%
\begin{align*}
&  \sum_{x}p_{X}\left(  x\right)  H\left(  \left(  \mathcal{N}^{c}\right)
^{\otimes n}\left(  \rho_{x}\right)  \right) \\
&  \geq\sum_{x}p_{X}\left(  x\right)  g\left(  \lambda_{x}\left(
1-\eta\right)  N_{S,x}\right) \\
&  \geq g\left(  \lambda\left(  1-\eta\right)  N_{S}\right)  .
\end{align*}

\end{proof}

\subsubsection{Special Cases of the Quantum Dynamic Region}

A similar proof can be used to show that our characterization of the CE
trade-off curve is optimal for all $\eta\in\left[  0,1\right]  $ and that our
characterization of the CQ\ trade-off curve is optimal for all $\eta\in\left[
1/2,1\right]  $. Of course, both proofs require Strong Conjecture~2 from
Refs.~\cite{GSE07,G08} (which holds if the entropy photon-number inequality is true).

We also can completely characterize the trade-off between quantum
communication and entanglement consumption for a lossy bosonic channel with
$\eta\geq1/2$.

\begin{theorem}
The trade-off between entanglement assistance and quantum communication (when
$C=0$, $Q\geq0$, and $E\leq0$) given by (\ref{eq:bosonic-region-1}%
-\ref{eq:bosonic-region-3}) is optimal for the lossy bosonic channel with
$\eta\geq1/2$.
\end{theorem}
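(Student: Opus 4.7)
The plan is to specialize the general quantum-dynamic converse just completed to the two-dimensional face where $C=0$, $Q\geq 0$, and $E\leq 0$, rather than to re-run a fresh optimization. By Proposition~\ref{prop:CQE-discrete}, for any $n$-letter ensemble $\{p_X(x),\rho_x\}$ the achievable rates must satisfy (\ref{eq:CQE-entropies-1})--(\ref{eq:CQE-entropies-3}). Setting $C=0$, the binding constraints become
\begin{align*}
2Q &\leq H(A|X)_\rho+H(B)_\rho-H(E|X)_\rho,\\
Q+E &\leq H(B|X)_\rho-H(E|X)_\rho,
\end{align*}
while the remaining constraint $Q+E\leq H(B)_\rho-H(E|X)_\rho$ is strictly weaker than the second, since $H(B|X)_\rho\leq H(B)_\rho$ for every state of the form (\ref{eq:aug-code-states}). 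So optimality on the $(C{=}0)$-face reduces to producing matching single-letter upper bounds on just these two quantities.

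For that step I would simply reuse the four single-letter bounds (\ref{eq:CQE-conv-1})--(\ref{eq:CQE-conv-4}) already established in the converse for the full region. Those bounds simultaneously produce a single photon-number-sharing parameter $\lambda\in[0,1]$ (depending on the code) for which
\[
H(A|X)_\rho\leq ng(\lambda N_S),\qquad H(B)_\rho\leq ng(\eta N_S),
\]
\[
H(B|X)_\rho\leq ng(\eta\lambda N_S),\qquad H(E|X)_\rho\geq ng\bigl((1-\eta)\lambda N_S\bigr).
\]
Substituting into the two binding inequalities above and dividing by $n$ yields
\begin{align*}
2Q &\leq g(\lambda N_S)+g(\eta N_S)-g\bigl((1-\eta)\lambda N_S\bigr),\\
Q+E &\leq g(\eta\lambda N_S)-g\bigl((1-\eta)\lambda N_S\bigr),
\end{align*}
which is exactly the $C=0$ slice of the region (\ref{eq:bosonic-region-1})--(\ref{eq:bosonic-region-3}). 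Taking the union over $\lambda\in[0,1]$ then matches the achievability region of Theorem~3, closing the gap and giving the full QE trade-off curve.

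The main obstacle is not genuinely new here: it is inherited from the general converse, namely the reliance on Strong Conjecture~2 of Refs.~\cite{GSE07,G08} (equivalently the entropy photon-number inequality), which is what underwrites the environment-side lower bound $H(E|X)_\rho\geq ng((1-\eta)\lambda N_S)$ together with the use of $\eta\geq 1/2$ so that Bob can simulate Eve via a beamsplitter of transmissivity $(1-\eta)/\eta$. The one bookkeeping subtlety worth flagging is that a single $\lambda$ must parameterize the Bob-side and Eve-side entropies simultaneously; this compatibility is already secured by the monotonicity of $g$ combined with the concavity step (a modified Corollary~A.4 of Ref.~\cite{G08}) that was used in the general proof. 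Beyond invoking those ingredients, the specialization to the QE face is a direct substitution rather than a new argument.
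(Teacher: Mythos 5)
Your derivation of the $C=0$ slice is fine as a piece of bookkeeping, but it proves a strictly weaker statement than the theorem. The bounds (\ref{eq:CQE-conv-1})--(\ref{eq:CQE-conv-4}) that you recycle are themselves conditional on Strong Conjecture~2 of Refs.~\cite{GSE07,G08}, so your argument establishes the entanglement-assisted quantum trade-off only modulo the unproven minimum-output entropy conjecture. The whole point of this particular theorem---and the reason it is stated separately, immediately after the remark that the CE and CQ special cases ``require Strong Conjecture~2''---is that this trade-off can be characterized \emph{unconditionally} for $\eta\geq 1/2$ (the paper says it can be ``completely'' characterized). Your proposal, which you yourself flag as inheriting the conjecture, therefore leaves the essential content of the theorem unproven: there is no step in your argument that removes the conjecture dependence.

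The paper's route avoids the conjecture by changing the starting point. It uses the characterization of the entanglement-assisted quantum capacity region from Ref.~\cite{DHW05RI}, so that the boundary is obtained by optimizing the Lagrangian $H(A)_{\rho}+(\mu+1)I(A\rangle B)_{\rho}$. For a degradable channel this quantity is additive \cite{DS03}, which single-letterizes the problem without invoking the $n$-letter converse at all; moreover, degradability lets one rewrite the coherent information as a conditional entropy $H(F|E)$, and the extremality of Gaussian states for entropy and conditional entropy \cite{EW07,WGC06} then restricts the optimization to Gaussian inputs. For Gaussian states, Strong Conjecture~2 is a proven statement, so the environment-entropy lower bound---exactly the step for which you needed the conjecture---holds unconditionally; a final monotonicity argument in the photon-number parameter (valid for $\eta\geq1/2$) places the optimum at $\lambda=1$, matching the thermal-state achievability. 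To prove the theorem as stated, your specialization would have to be supplemented by this degradability-plus-Gaussian-extremality reduction (or some substitute for it); as written, reducing to the general converse cannot deliver the unconditional claim.
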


\begin{proof}
Recall from Ref.~\cite{DHW05RI}\ that the characterization of the
entanglement-assisted quantum capacity region for any channel $\mathcal{N}$ is
the regularization the union of the following regions:%
\begin{align*}
2Q  &  \leq I\left(  A;B\right)  _{\rho},\\
Q  &  \leq I\left(  A\rangle B\right)  _{\rho}+\left\vert E\right\vert ,
\end{align*}
where the entropies are with respect to a state $\rho^{AB}\equiv
\mathcal{N}^{A^{\prime}\rightarrow B}(\phi^{AA^{\prime}})$, the union of the
above regions is over pure, bipartite states $\phi^{AA^{\prime}}$, $Q$ is the
rate of quantum communication, and $E$ is the rate of entanglement
consumption. Characterizing the boundary of the region is equivalent to
optimizing the following function~\cite{DHW05RI}:%
\[
\max_{\rho}I\left(  A;B\right)  _{\rho}+\mu I\left(  A\rangle B\right)
_{\rho},
\]
where $\mu$ is a positive number playing the role of a Lagrange multiplier. We
can rewrite the above function in the following form:%
\[
\max_{\rho}H\left(  A\right)  _{\rho}+\left(  \mu+1\right)  I\left(  A\rangle
B\right)  _{\rho},
\]
because $I\left(  A;B\right)  =H\left(  A\right)  +I\left(  A\rangle B\right)
$. It is straightforward to show that the above formula is additive for
degradable channels~\cite{DS03}, and furthermore, we can rewrite the coherent
information $I\left(  A\rangle B\right)  _{\rho}$ as a conditional entropy
$H\left(  F|E\right)  $, whenever the channel is degradable~\cite{DS03} (let
$F$ be the environment of the degrading map). Then, from the extremality of
Gaussian states for entropy and conditional entropy~\cite{EW07,WGC06}, it
suffices to perform the above optimization over only Gaussian states. Recall
that $I\left(  A\rangle B\right)  =H\left(  B\right)  -H\left(  E\right)  $
where $E$ is the environment of the channel. For a lossy bosonic channel with
$\eta\geq1/2$ with mean input photon number~$N_{S}$, the following bounds hold%
\begin{align*}
0  &  \leq H\left(  A\right)  _{\rho}\leq g\left(  N_{S}\right)  ,\\
0  &  \leq H\left(  B\right)  _{\rho}\leq g\left(  \eta N_{S}\right)  .
\end{align*}
Thus, there exist some $\lambda^{\prime},\lambda\in\left[  0,1\right]  $ such
that $H\left(  A\right)  _{\rho}=g\left(  \lambda^{\prime}N_{S}\right)  $ and
$H\left(  B\right)  _{\rho}=g\left(  \eta\lambda N_{S}\right)  $ from the
monotonicity of $g\left(  x\right)  $. Also, we have that $g\left(
\eta\lambda N_{S}\right)  \geq g\left(  \eta\lambda^{\prime}N_{S}\right)  $
from the same reasoning as in the above proof, though Strong Conjecture~2 from
Ref.~\cite{GSE07,G08}\ is known to hold for Gaussian states. This then implies
that $g\left(  \lambda N_{S}\right)  \geq g\left(  \lambda^{\prime}%
N_{S}\right)  $ by the same reasoning as in the above proof. Also, we have the
following lower bound from Strong Conjecture~2 (which holds for Gaussian
states):%
\[
H\left(  E\right)  _{\rho}\geq g\left(  \left(  1-\eta\right)  \lambda
N_{S}\right)  ,
\]
by the same reasoning as in the above proof. Thus, we have the following upper
bound for a particular state $\rho$:%
\begin{multline*}
H\left(  A\right)  _{\rho}+\left(  \mu+1\right)  I\left(  A\rangle B\right)
_{\rho}\\
\leq g\left(  \eta\lambda N_{S}\right)  +\left(  \mu+1\right)  \left[
g\left(  \eta\lambda N_{S}\right)  -g\left(  \left(  1-\eta\right)  \lambda
N_{S}\right)  \right]  .
\end{multline*}
It is straightforward to show that $g\left(  \eta x\right)  -g\left(  \left(
1-\eta\right)  x\right)  $ is a monotonically increasing function in $x$
whenever $\eta\geq1/2$, by considering that $g\left(  \eta x\right)  =g\left(
\left(  1-\eta\right)  x\right)  =0$ for $x=0$, $g\left(  \eta x\right)  \geq
g\left(  \left(  1-\eta\right)  x\right)  $, and $\frac{\partial}{\partial
x}g\left(  \eta x\right)  \geq\frac{\partial}{\partial x}g\left(  \left(
1-\eta\right)  x\right)  $ whenever $\eta\geq1/2$. Thus, we obtain our final
upper bound by setting $\lambda=1$. These rates are achievable simply taking
the input state $\phi^{A^{\prime}}$ to be thermal with mean photon number
$N_{S}$.
\end{proof}

\subsection{Converse for the Private Dynamic Region}

We can prove the converse for the private dynamic capacity region similarly to
how we did for the quantum dynamic capacity region.
Proposition~\ref{prop:RPS-dynamic} states that the regularization of the
region in (\ref{eq:private-dynamic-region-1}-\ref{eq:private-dynamic-region-3}%
) is equivalent to the private dynamic capacity region. We prove that the
region in (\ref{eq:private-dynamic-1}-\ref{eq:private-dynamic-3}) is
equivalent to the capacity region for a lossy bosonic channel with
transmissivity parameter $\eta>1/2$ and mean input photon number$~N_{S}$, up
to a minimum output entropy conjecture. We do so by proving the following
upper bounds:%
\begin{align}
H\left(  \mathcal{N}^{\otimes n}\left(  \rho\right)  \right)   &  \leq
ng\left(  \eta N_{S}\right)  ,\\
\sum_{x}p_{X}\left(  x\right)  H\left(  \mathcal{N}^{\otimes n}\left(
\rho_{x}\right)  \right)   &  \leq ng\left(  \eta\lambda N_{S}\right)  ,
\end{align}
and the following lower bounds (the first up to the minimum output entropy
conjecture):%
\begin{align}
\sum_{x}p_{X}\left(  x\right)  H\left(  \left(  \mathcal{N}^{c}\right)
^{\otimes n}\left(  \rho_{x}\right)  \right)   &  \geq ng\left(  \left(
1-\eta\right)  \lambda N_{S}\right)  ,\nonumber\\
\sum_{x,y}p\left(  x\right)  p\left(  y|x\right)  H\left(  \mathcal{N}%
^{\otimes n}\left(  \rho_{x,y}\right)  \right)   &  \geq0.
\label{eq:RPS-conv-4}%
\end{align}
so that for all $n$-letter ensembles $\left\{  p_{X}\left(  x\right)
p_{Y|X}\left(  y|x\right)  ,\rho_{x,y}\right\}  $ with $\rho_{x,y}%
\in\mathcal{B}\left(  \mathcal{H}^{\otimes n}\right)  $, $\rho_{x}\equiv
\sum_{y}p_{Y|X}\left(  y|x\right)  \rho_{x,y}$, and $\rho\equiv\sum_{x}%
p_{X}\left(  x\right)  \rho_{x}$, there exists some $\lambda\in\left[
0,1\right]  $ such that the above bounds hold. The above bounds immediately
imply that the region in (\ref{eq:private-dynamic-1}%
-\ref{eq:private-dynamic-3}) is the private dynamic capacity region of the
lossy bosonic channel.

\begin{proof}
The last bound in (\ref{eq:RPS-conv-4}) follows simply because the quantum
entropy is always positive. The other bounds follow by the same method given
in the converse of the quantum dynamic capacity region for the lossy bosonic channel.
\end{proof}

\section{The Thermal Noise Channel}

\label{sec:thermal}The thermal noise channel is the same map as in
(\ref{eq:lossy-bosonic-channel-1}-\ref{eq:lossy-bosonic-channel-2}), with the
exception that the environment is in a thermal state with mean photon
number$~N_{B}$.

\subsection{Quantum Dynamic Region}

\begin{theorem}
An achievable quantum dynamic region for the thermal noise channel with
transmissivity $\eta$ and mean thermal photon number $N_{B}$ is as follows:%
\begin{align*}
C+2Q  &  \leq g\left(  \lambda N_{S}\right)  +g\left(  \eta N_{S}+\left(
1-\eta\right)  N_{B}\right) \\
&  \ \ \ \ \ \ \ \ -g_{E}^{\eta}\left(  \lambda,N_{S},N_{B}\right)  ,\\
Q+E  &  \leq g\left(  \eta\lambda N_{S}+\left(  1-\eta\right)  N_{B}\right)
-g_{E}^{\eta}\left(  \lambda,N_{S},N_{B}\right)  ,\\
C+Q+E  &  \leq g\left(  \eta N_{S}+\left(  1-\eta\right)  N_{B}\right)
-g_{E}^{\eta}\left(  \lambda,N_{S},N_{B}\right)  ,
\end{align*}
where $\lambda\in\left[  0,1\right]  $ is a photon-number-sharing parameter,
$g\left(  N\right)  $ is defined in (2), and%
\begin{multline*}
g_{E}^{\eta}\left(  \lambda,N_{S},N_{B}\right)  \equiv\\
g\left(  \left[  D+\lambda\left(  1-\eta\right)  N_{S}-\left(  1-\eta\right)
N_{B}-1\right]  /2\right) \\
+g\left(  \left[  D-\lambda\left(  1-\eta\right)  N_{S}+\left(  1-\eta\right)
N_{B}-1\right]  /2\right)  ,
\end{multline*}%
\begin{multline*}
D^{2}\equiv\left[  \lambda\left(  1+\eta\right)  N_{S}+\left(  1-\eta\right)
N_{B}+1\right]  ^{2}\\
-4\eta\lambda N_{S}\left(  \lambda N_{S}+1\right)  .
\end{multline*}

\end{theorem}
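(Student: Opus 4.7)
The plan is to mirror the structure of the lossy-bosonic proof, with the key new ingredient being the entropy of the complementary-channel output on a thermal input, which now requires a genuine two-mode Gaussian computation because the environment is no longer pure vacuum.

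First, I would reuse the input ensemble from the lossy-bosonic case,
\[
\{p_{\bar\lambda N_S}(\alpha),\ D^{A'}(\alpha)|\psi_{\text{TMS}}\rangle^{AA'}\},
\]
where $|\psi_{\text{TMS}}\rangle$ is chosen so that its marginal on $A'$ is a thermal state $\theta$ with mean photon number $\lambda N_S$. Averaging the displacement over the Gaussian prior $p_{\bar\lambda N_S}$ gives an overall transmitted state $\bar\theta$ that is thermal with mean photon number $N_S$, so the input photon-number constraint is met. The truncation/limiting argument at the end of the previous theorem carries over verbatim, justifying the use of Proposition~\ref{prop:CQE-discrete} in the infinite-dimensional setting.

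Second, I would compute the four ensemble entropies in (\ref{eq:triple-trade-off-entropies-1})--(\ref{eq:triple-trade-off-entropies-4}). Both $\mathcal{N}$ and its complement $\mathcal{N}^c$ remain covariant with respect to input displacements — the thermal environment is a fixed Gaussian state — so the $\alpha$-integrals collapse exactly as in (\ref{eq:CQE-Bob-entropy})--(\ref{eq:CQE-Eve-entropy}). Three of the entropies are then immediate: $H(\mathcal{N}(\bar\theta))=g(\eta N_S+(1-\eta)N_B)$ because the thermal-noise channel maps a thermal input of mean $N$ to a thermal output of mean $\eta N+(1-\eta)N_B$; $\sum_x p_X(x) H(\rho_x)=H(\theta)=g(\lambda N_S)$; and $\sum_x p_X(x) H(\mathcal{N}(\rho_x))=H(\mathcal{N}(\theta))=g(\eta\lambda N_S+(1-\eta)N_B)$.

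Third, for the remaining entropy $H(\mathcal{N}^c(\theta))$ I would purify the thermal environment by introducing an ancillary mode $\hat e'$ so that $(\hat e,\hat e')$ is in a two-mode squeezed vacuum with parameter $N_B$. The Stinespring dilation of $\mathcal{N}$ is then the beamsplitter (\ref{eq:lossy-bosonic-channel-1})--(\ref{eq:lossy-bosonic-channel-2}) acting on $(A',\hat e)$, and $\mathcal{N}^c$ maps $A'$ isometrically to the two-mode system $(\hat e_{\text{out}},\hat e')$. I would write down the $4\times 4$ covariance matrix of this output when the input on $A'$ is thermal with mean $\lambda N_S$, symplectically diagonalize it, and read off the two symplectic eigenvalues $\nu_\pm$. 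Their equivalent mean photon numbers $\nu_\pm-\tfrac12$ come out to $(D\pm[\lambda(1-\eta)N_S-(1-\eta)N_B]-1)/2$, with $D$ the square root of the discriminant of the quadratic in $\nu^2$, matching the $D^2$ stated in the theorem. Since the output is Gaussian, its entropy is $g(\nu_+-\tfrac12)+g(\nu_--\tfrac12)$, which is exactly $g_E^\eta(\lambda,N_S,N_B)$.

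Finally, I would substitute these four entropies into (\ref{eq:triple-bosonic-help-1})--(\ref{eq:triple-bosonic-help-3}) to obtain the three stated rate inequalities. The main obstacle is the symplectic diagonalization in the third step: although the calculation is standard in the Gaussian toolbox, one has to handle the symplectic-form conventions carefully and identify the $\nu_\pm^2$ as roots of a quadratic whose discriminant collapses to the compact expression for $D^2$. Everything else is a direct analog of the lossy-bosonic argument.
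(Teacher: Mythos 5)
Your proposal is correct and takes essentially the same route as the paper: the same displaced two-mode-squeezed ensemble, the same displacement-invariance argument collapsing the $\alpha$-averages, the same three thermal entropies, and the fourth entropy obtained from the symplectic eigenvalues of a two-mode Gaussian covariance matrix (the paper imports this from the Holevo--Werner/Giovannetti \emph{et al.} entropy-exchange computation on the joint reference--output state, which by purity of the global four-mode state equals the complementary-output entropy you compute after purifying the thermal environment). Your stated eigenvalue mean photon numbers $\left(D\pm\left[\lambda\left(1-\eta\right)N_{S}-\left(1-\eta\right)N_{B}\right]-1\right)/2$ do reproduce $g_{E}^{\eta}\left(\lambda,N_{S},N_{B}\right)$ with the quoted $D^{2}$, so the substitution into the three inequalities goes through as claimed.
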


\begin{proof}
We use the same coding strategy as in (\ref{eq:bosonic-ensemble}) and then
simply need to calculate the four entropies in
(\ref{eq:bosonic-CQE-entropies-1}-\ref{eq:bosonic-CQE-entropies-4}) for this
case. The average output state is a thermal state with mean number of photons
$\eta N_{S}+\left(  1-\eta\right)  N_{B}$, implying that the first entropy in
(\ref{eq:bosonic-CQE-entropies-1}) is%
\[
H\left(  \mathcal{N}\left(  \overline{\theta}\right)  \right)  =g\left(  \eta
N_{S}+\left(  1-\eta\right)  N_{B}\right)  .
\]
The second entropy in (\ref{eq:bosonic-CQE-entropies-2}) is the same as before
because it is the entropy of the half of the state not transmitted through the
channel:%
\[
\int d\alpha\ p_{\overline{\lambda}N_{S}}\left(  \alpha\right)  \ H\left(
D\left(  \alpha\right)  \theta D^{\dag}\left(  \alpha\right)  \right)
=g\left(  \lambda N_{S}\right)  .
\]
The state of the output conditioned on the displacement operator applied is a
thermal state with mean photon number $\eta\lambda N_{S}+\left(
1-\eta\right)  N_{B}$. Thus, the third entropy in
(\ref{eq:bosonic-CQE-entropies-3}) is%
\begin{multline*}
\int d\alpha\ p_{\overline{\lambda}N_{S}}\left(  \alpha\right)  \ H\left(
\mathcal{N}\left(  D\left(  \alpha\right)  \theta D^{\dag}\left(
\alpha\right)  \right)  \right) \\
=g\left(  \eta\lambda N_{S}+\left(  1-\eta\right)  N_{B}\right)  .
\end{multline*}
We calculate the fourth entropy in (\ref{eq:bosonic-CQE-entropies-4}) in a
different way, along the lines presented in Refs.~\cite{HW01,GLMS03a}. The
displacement operator does not affect the correlation matrix of the two-mode
squeezed state, and thus the entropy of this state does not change under such
a transformation. In this case, the entropy is%
\begin{multline}
g_{E}^{\eta}\left(  \lambda,N_{S},N_{B}\right)  \equiv
\label{eq:entropy-eve-thermal}\\
g\left(  \left[  D+\lambda\left(  1-\eta\right)  N_{S}-\left(  1-\eta\right)
N_{B}-1\right]  /2\right) \\
+g\left(  \left[  D-\lambda\left(  1-\eta\right)  N_{S}+\left(  1-\eta\right)
N_{B}-1\right]  /2\right)  ,
\end{multline}
where%
\begin{multline*}
D^{2}\equiv\left[  \lambda\left(  1+\eta\right)  N_{S}+\left(  1-\eta\right)
N_{B}+1\right]  ^{2}\\
-4\eta\lambda N_{S}\left(  \lambda N_{S}+1\right)  .
\end{multline*}
Plugging in the entropies gives us the statement of the proposition.
\end{proof}

The trade-off region for classical and quantum communication is%
\begin{align}
Q  &  \leq g\left(  \eta\left(  \lambda\right)  N_{S}+\left(  1-\eta\right)
N_{B}\right) \nonumber\\
&  \ \ \ \ \ \ \ \ -g_{E}^{\eta}\left(  \lambda,N_{S},N_{B}\right)
,\label{eq:CQ-thermal-1}\\
C+Q  &  \leq g\left(  \eta N_{S}+\left(  1-\eta\right)  N_{B}\right)
-g_{E}^{\eta}\left(  \lambda,N_{S},N_{B}\right)  . \label{eq:CQ-thermal-2}%
\end{align}
The trade-off region for assisted and unassisted classical communication is%
\begin{align*}
C  &  \leq g\left(  \left(  \lambda\right)  N_{S}\right)  +g\left(  \eta
N_{S}+\left(  1-\eta\right)  N_{B}\right) \\
&  \ \ \ \ \ \ \ \ \ \ -g_{E}^{\eta}\left(  \lambda,N_{S},N_{B}\right)  ,\\
C  &  \leq g\left(  \eta N_{S}+\left(  1-\eta\right)  N_{B}\right)
-g_{E}^{\eta}\left(  \lambda,N_{S},N_{B}\right)  +E,
\end{align*}
where above our convention changes so that positive~$E$ corresponds to the
consumption of entanglement. Figure~\ref{fig:thermalizing-bosonic}(a) depicts
the trade-off between classical and quantum communication for a thermal noise
channel, and Figure~\ref{fig:thermalizing-bosonic}(b) depicts the trade-off
between assisted and unassisted classical communication.%
\begin{figure*}
[ptb]
\begin{center}
\includegraphics[
natheight=4.159700in,
natwidth=9.280300in,
width=6.7525in
]%
{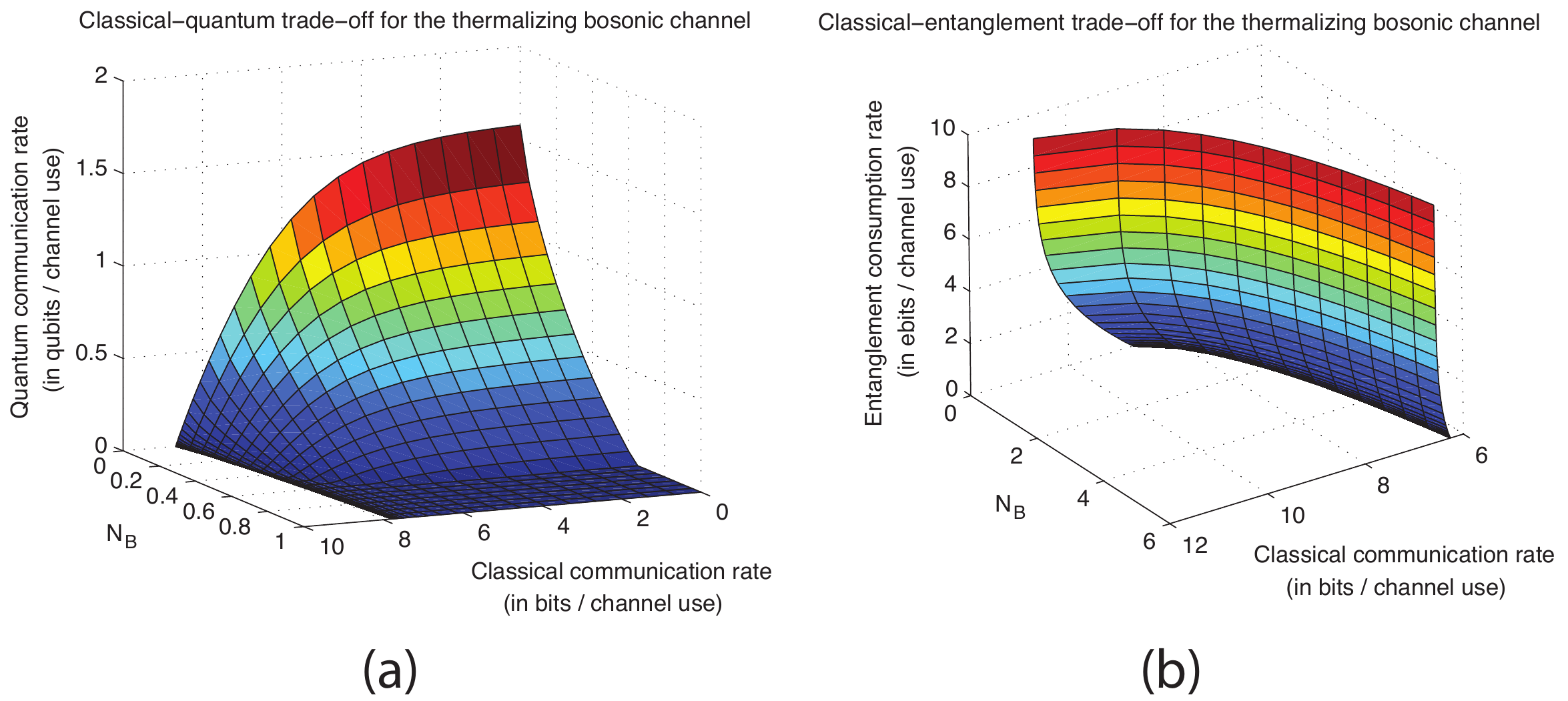}%
\caption{(Color online) This figure displays the effect of increasing thermal noise
(parametrized by~$N_{B}$) on the trade-off between (a)\ classical and quantum
communication and (b) entanglement-assisted and unassisted classical
communication. We cannot say whether any points in these regions are optimal
because the capacity of the thermal channel is unknown (though, they are known
if the minimum-output entropy conjecture is true~\cite{GGLMS04,GHLM10}).}%
\label{fig:thermalizing-bosonic}%
\end{center}
\end{figure*}

\subsection{Private Dynamic Region}

\begin{theorem}
An achievable private dynamic region for the thermal noise channel with
transmissivity $\eta$ and mean thermal photon number $N_{B}$ is as follows:%
\begin{align*}
R+P  &  \leq g\left(  \eta N_{S}+\left(  1-\eta\right)  N_{B}\right)
-g\left(  \left(  1-\eta\right)  N_{B}\right)  ,\\
P+S  &  \leq g\left(  \eta\lambda N_{S}+\left(  1-\eta\right)  N_{B}\right)
-g_{E}^{\eta}\left(  \lambda,N_{S},N_{B}\right)  ,\\
R+P+S  &  \leq g\left(  \eta N_{S}+\left(  1-\eta\right)  N_{B}\right)
-g_{E}^{\eta}\left(  \lambda,N_{S},N_{B}\right)  ,
\end{align*}
where $\lambda\in\left[  0,1\right]  $ is a photon-number-sharing parameter,
$g\left(  N\right)  $ is defined in (2), and $g_{E}^{\eta}$ is defined in
(\ref{eq:entropy-eve-thermal}).
\end{theorem}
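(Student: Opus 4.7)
The plan is to mirror the lossy-bosonic private-dynamic-region proof, using the same coherent-state ensemble but working out the four entropies for the thermal-noise channel. Specifically, I would take the input ensemble
\[
\bigl\{\,p_{\overline{\lambda}N_{S}}(\alpha)\,p_{\lambda N_{S}}(\beta),\ |\alpha+\beta\rangle\,\bigr\},
\]
where $\alpha$ plays the role of the public register $X$ and $\beta$ plays the role of the private register $Y$, and the Gaussian priors are as in~(\ref{eq:gaussian-prior}). As in the lossy case, averaging over $\beta$ alone gives $\rho_{\alpha}=D(\alpha)\theta D^{\dag}(\alpha)$, a displaced thermal state of mean photon number $\lambda N_{S}$, and averaging over both gives $\overline{\theta}$, a thermal state of mean photon number $N_{S}$, so the photon-number constraint is automatically met. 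Then I would plug into Proposition~\ref{prop:RPS-dynamic} (using it as a sufficient condition for achievability, noting that the thermal channel is degradable when $\eta\geq1/2$, $N_{B}$ is small enough, and the region one obtains is simply an achievable inner bound in the remaining parameter regime).

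Next, I would compute the four entropies in~(\ref{eq:RPS-entropy-1})–(\ref{eq:RPS-entropy-4}). Three of them are almost immediate from what has already been established for the thermal channel. The average output $\mathcal{N}(\overline{\theta})$ is a thermal state with mean photon number $\eta N_{S}+(1-\eta)N_{B}$, giving $g(\eta N_{S}+(1-\eta)N_{B})$. Each coherent-state input $|\alpha+\beta\rangle$ is mapped by $\mathcal{N}$ to a displaced thermal state of mean photon number $(1-\eta)N_{B}$ (the displacement does not affect entropy), giving $g((1-\eta)N_{B})$ for the term in~(\ref{eq:RPS-entropy-2}). Averaging $\rho_{\alpha}$ through $\mathcal{N}$ produces a displaced thermal state of mean photon number $\eta\lambda N_{S}+(1-\eta)N_{B}$, yielding $g(\eta\lambda N_{S}+(1-\eta)N_{B})$ for~(\ref{eq:RPS-entropy-3}).

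The only substantive calculation is the fourth entropy, $\sum_{\alpha}p_{\overline{\lambda}N_{S}}(\alpha)H(\mathcal{N}^{c}(\rho_{\alpha}))$. Here I would use exactly the trick already employed in the proof of the quantum dynamic theorem for the thermal noise channel: purify $\rho_{\alpha}$ using a two-mode squeezed vacuum of parameter $\lambda N_{S}$ (displaced by $D(\alpha)$ on the transmitted mode), send the transmitted mode through the isometric dilation of $\mathcal{N}$, and note that the displacement does not alter the symplectic covariance matrix of the bipartite signal–environment state. Diagonalizing that covariance matrix gives the two symplectic eigenvalues that appear in $g_{E}^{\eta}(\lambda,N_{S},N_{B})$, so this entropy is exactly the expression in~(\ref{eq:entropy-eve-thermal}). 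Substituting the four entropies into~(\ref{eq:private-dynamic-region-1})–(\ref{eq:private-dynamic-region-3}) and invoking the truncation/limiting argument from the lossy-bosonic proof to move from the finite-dimensional setting of Proposition~\ref{prop:RPS-dynamic} to the infinite-dimensional bosonic one yields the stated region.

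The main obstacle is of a bookkeeping rather than conceptual kind: correctly computing the symplectic spectrum for Eve's output state starting from a displaced thermal input with mean photon number $\lambda N_{S}$ (as opposed to $N_{S}$). This is handled by the same two-mode-squeezed purification technique that produced $g_{E}^{\eta}$ in the quantum dynamic thermal theorem, so effectively one just rescales $N_{S}\mapsto\lambda N_{S}$ in the appropriate places of that earlier computation. Everything else is a direct rewrite of the lossy-bosonic private-region proof with the thermal-channel output entropies substituted in.
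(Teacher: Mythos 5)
Your proposal is correct and follows essentially the same route as the paper: it reuses the coherent-state ensemble from the lossy-channel private-dynamic proof, observes that the only entropy that changes is the conditional coherent-state output entropy, now $g\left(\left(1-\eta\right)N_{B}\right)$ since the thermal channel does not preserve purity, and takes the remaining entropies (including Eve's entropy $g_{E}^{\eta}$ via the displaced two-mode-squeezed/covariance-matrix computation) directly from the quantum-dynamic thermal theorem. The substitution into Proposition~\ref{prop:RPS-dynamic} and the truncation/limiting argument match the paper's treatment as well.
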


\begin{proof}
We can obtain an expression for a private dynamic achievable rate region of
the thermal noise channel. If we use the same ensemble for coding as in
(\ref{eq:ensemble-public-private}), then the resulting private dynamic
achievable rate region is slightly different from that in
(\ref{eq:private-dynamic-1}-\ref{eq:private-dynamic-3}). A thermal channel
does not preserve the purity of coherent states transmitted through it. Thus,
the entropy in (\ref{eq:private-conditional-coherent-entropy}) is no longer
equal to zero, but it is instead equal to%
\begin{multline*}
\int\int d\alpha\ d\beta\ p_{\overline{\lambda}N_{S}}\left(  \alpha\right)
p_{\left(  \lambda\right)  N_{S}}\left(  \beta\right)  \ H\left(
\mathcal{N}\left(  \left\vert \alpha+\beta\right\rangle \left\langle
\alpha+\beta\right\vert \right)  \right) \\
=g\left(  \left(  1-\eta\right)  N_{B}\right)  ,
\end{multline*}
because Bob's state is a displaced thermal state with mean photon number
$\left(  1-\eta\right)  N_{B}$ (the amount of noise that the environment
injects into the state). Thus, the expression for the private dynamic
achievable rate region is as stated in the theorem.
\end{proof}

This region is generally smaller than the full quantum dynamic achievable rate
region for the same values of $N_{S}$, $N_{B}$, and $\eta$ because there is no
analog of the superdense coding effect with the resources of public classical
communication, private classical communication, and secret key
\cite{WH10a,CP02}. Though, the trade-off between public and private
communication with this coding strategy is the same as that between classical
and quantum communication in (\ref{eq:CQ-thermal-1}-\ref{eq:CQ-thermal-2}).

\section{The Amplifying Channel}

\label{sec:amplifier}The amplifying channel is another bosonic channel
important in applications, modeling any kind of amplification process that can
occur in bosonic systems. These applications range from cavities coupled with
Josephson junctions~\cite{CDGMS10}, to non-degenerate parametric amplifiers in
quantum optics~\cite{C82}, to the Unruh effect from relativistic quantum
mechanics~\cite{PhysRevD.14.870}.
%Br\'{a}dler \textit{et al}.~have studied the
%consequences of the Unruh effect for capacities of quantum channels in some
%detail, under the assumption that the receiver is uniformly accelerating with
%respect to the sender and the sender encodes information into a single
%excitation of multiple
%modes~\cite{BHP09,Bradler_Cloning,bradler:072201,BHTW10,Bradler_Rindler,JBW11}%
%. Our results in this section apply to the more general setting where we
%impose only a finite photon-number constraint on the input modes.

The amplifying channel corresponds to the following transformation of the
input annihilation operator:%
\begin{align}
\hat{a}  &  \rightarrow\sqrt{\kappa}\ \hat{a}+\sqrt{\kappa-1}\ \hat{e}^{\dag
},\label{eq:amplify-channel-1}\\
\hat{e}^{\dag}  &  \rightarrow\sqrt{\kappa-1}\ \hat{a}+\sqrt{\kappa}\ \hat
{e}^{\dag},\label{eq:amplify-channel-2}%
\end{align}
where $\kappa\geq1$ is the amplifier gain and $\hat{e}$ is now an auxiliary
mode associated with the amplification process. If the state of the auxiliary
mode is a vacuum state, then this auxiliary mode injects the minimum possible
noise into the signal mode. We can consider the auxiliary mode more generally
to be in a thermal state with mean photon number$~N_{B}$.

\subsection{Achievable Quantum and Private Dynamic Regions}

\begin{theorem}
\label{thm:CQE-amplify}An achievable quantum dynamic region for the amplifying
channel with gain $\kappa\geq1$ and mean thermal photon number $N_{B}$ is as
follows:%
\begin{align*}
C+2Q  &  \leq g\left(  \lambda N_{S}\right)  +g\left(  \kappa N_{S}+\left(
\kappa-1\right)  \left(  N_{B}+1\right)  \right) \\
&  \ \ \ \ \ -g_{E}^{\kappa}\left(  \lambda,N_{S},N_{B}\right)  ,\\
Q+E  &  \leq g\left(  \kappa\lambda N_{S}+\left(  \kappa-1\right)  \left(
N_{B}+1\right)  \right) \\
&  \ \ \ \ \ -g_{E}^{\kappa}\left(  \lambda,N_{S},N_{B}\right)  ,\\
C+Q+E  &  \leq g\left(  \kappa N_{S}+\left(  \kappa-1\right)  \left(
N_{B}+1\right)  \right) \\
&  \ \ \ \ \ -g_{E}^{\kappa}\left(  \lambda,N_{S},N_{B}\right)  ,
\end{align*}
where $\lambda\in\left[  0,1\right]  $ is a photon-number-sharing parameter,
$g\left(  N\right)  $ is defined in (2), and%
\begin{multline*}
g_{E}^{\kappa}\left(  \lambda,N_{S},N_{B}\right)  \equiv\\
g\left(  \left[  D+\left(  \kappa-1\right)  \left[  \lambda N_{S}%
+N_{B}+1\right]  -1\right]  /2\right) \\
+g\left(  \left[  D-\left(  \kappa-1\right)  \left[  \lambda N_{S}%
+N_{B}+1\right]  -1\right]  /2\right)  ,
\end{multline*}%
\begin{multline*}
D^{2}\equiv\left[  \lambda\left(  1+\kappa\right)  N_{S}+\left(
\kappa-1\right)  \left(  N_{B}+1\right)  +1\right]  ^{2}\\
-4\kappa\lambda N_{S}\left(  \lambda N_{S}+1\right)  .
\end{multline*}

\end{theorem}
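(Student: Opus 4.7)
The plan is to mimic the strategy used for both the pure-loss and thermal noise theorems: pick the same ensemble as in (\ref{eq:bosonic-ensemble}), namely a TMS vacuum of parameter $\lambda N_S$ displaced by an isotropic Gaussian prior of variance $\overline{\lambda} N_S$, plug it into Proposition~\ref{prop:CQE-discrete}, and evaluate the four entropies in (\ref{eq:bosonic-CQE-entropies-1}--\ref{eq:bosonic-CQE-entropies-4}) with $\mathcal{N}$ now being the amplifying channel of (\ref{eq:amplify-channel-1}--\ref{eq:amplify-channel-2}). The mean photon number of the average input is still $N_S$, so the photon-number constraint is met, and the same truncation/limiting argument used in the lossy theorem justifies applying the (finite-dimensional) Proposition~\ref{prop:CQE-discrete} to this continuous-variable ensemble.

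First I would compute the three ``easy'' entropies. The channel action on thermal input with mean photon number $N$ maps it to a thermal state with mean photon number $\kappa N + (\kappa-1)(N_B+1)$. Hence $H(\mathcal{N}(\overline{\theta})) = g(\kappa N_S + (\kappa-1)(N_B+1))$, giving the $C{+}Q{+}E$ expression. Since displacements are unitary, $\int d\alpha\, p_{\overline{\lambda}N_S}(\alpha)\, H(D(\alpha)\theta D^{\dagger}(\alpha)) = H(\theta) = g(\lambda N_S)$. Using covariance of the amplifier with respect to displacements (the output of $D(\alpha)\theta D^\dagger(\alpha)$ is a displaced amplifier output of $\theta$), the third entropy is $g(\kappa \lambda N_S + (\kappa-1)(N_B+1))$.

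The main obstacle is the fourth entropy $\int d\alpha\, p_{\overline{\lambda}N_S}(\alpha)\, H(\mathcal{N}^c(D(\alpha)\theta D^{\dagger}(\alpha)))$. Because the complementary amplifier is again displacement-covariant (up to a conjugate displacement on the environment), displacements drop out and we are left with $H(\mathcal{N}^c(\theta))$. I would compute this by the standard Gaussian recipe used in Refs.~\cite{HW01,GLMS03a}: write down the $2\times 2$ covariance matrices of the thermal state $\theta$ and of the ancilla, propagate them through the symplectic transformation (\ref{eq:amplify-channel-1}--\ref{eq:amplify-channel-2}) to obtain the covariance matrix of the environment output, extract the two symplectic eigenvalues $\nu_\pm$ of this two-mode state, and use $H = g((\nu_+-1)/2) + g((\nu_--1)/2)$. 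The discriminant of the characteristic polynomial produces exactly the quantity $D$ in the statement of the theorem, with $D^2 = [\lambda(1+\kappa)N_S + (\kappa-1)(N_B+1) + 1]^2 - 4\kappa\lambda N_S(\lambda N_S+1)$, and the linear offsets in the two $g$-arguments follow from the traces of the two one-mode blocks. This yields the definition of $g_E^{\kappa}(\lambda,N_S,N_B)$.

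Finally, I would substitute the four entropies into (\ref{eq:triple-bosonic-help-1}--\ref{eq:triple-bosonic-help-3}) to obtain the stated inequalities. The limits $\lambda \to 0$ and $\lambda \to 1$ provide useful sanity checks: at $\lambda = 0$ the coding reduces to displaced vacuum (a coherent-state ensemble) and at $\lambda = 1$ it reduces to the full TMS input, matching the amplifier's known classical and entanglement-assisted capacity expressions. The only genuinely nontrivial calculation is the symplectic-eigenvalue computation that yields $g_E^{\kappa}$; once that is in hand the rest of the proof is mechanical.
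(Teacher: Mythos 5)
Your proposal is correct and follows essentially the same route as the paper: the same displaced two-mode-squeezed ensemble, the same covariance and unitary-invariance arguments for the first three entropies, and the same Gaussian (Holevo--Werner) symplectic-eigenvalue computation for the environment entropy $g_E^{\kappa}$, which the paper likewise defers to Section~V-A of Ref.~\cite{HW01}. The substitution into Proposition~\ref{prop:CQE-discrete} with the truncation/limiting argument matches the paper's treatment as well.
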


\begin{theorem}
\label{thm:RPS-amplify}An achievable private dynamic region for the amplifying
channel with gain $\kappa\geq1$ and mean thermal photon number $N_{B}$ is as
follows:%
\begin{align*}
R+P  &  \leq g\left(  \kappa N_{S}+\left(  \kappa-1\right)  \left(
N_{B}+1\right)  \right) \\
&  \ \ \ \ \ \ \ \ -g\left(  \left(  \kappa-1\right)  \left(  N_{B}+1\right)
\right)  ,\\
P+S  &  \leq g\left(  \kappa\lambda N_{S}+\left(  \kappa-1\right)  \left(
N_{B}+1\right)  \right) \\
&  \ \ \ \ \ \ \ \ -g_{E}^{\kappa}\left(  \lambda,N_{S},N_{B}\right)  ,\\
R+P+S  &  \leq g\left(  \kappa N_{S}+\left(  \kappa-1\right)  \left(
N_{B}+1\right)  \right) \\
&  \ \ \ \ \ \ \ \ -g_{E}^{\kappa}\left(  \lambda,N_{S},N_{B}\right)  ,
\end{align*}
where $\lambda\in\left[  0,1\right]  $ is a photon-number-sharing parameter,
$g\left(  N\right)  $ is defined in (2), and $g_{E}^{\kappa}$ is defined in
(\ref{eq:eve-entropy-amplify}).
\end{theorem}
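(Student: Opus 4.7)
The plan is to apply Proposition~\ref{prop:RPS-dynamic} to the coherent-state ensemble already used in~(\ref{eq:ensemble-public-private}) and to evaluate the four entropies in~(\ref{eq:RPS-entropy-1}--\ref{eq:RPS-entropy-4}) for the amplifying channel specified by~(\ref{eq:amplify-channel-1}--\ref{eq:amplify-channel-2}) with thermal ancilla of mean $N_B$. The averaging argument leading to~(\ref{eq:private-dynamic-unconditional-state}) still shows that the average input state is thermal of mean $N_S$, so the photon-number budget is respected.

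First I would dispatch the three entropies that involve only $\mathcal{N}$. Because the amplifier maps a thermal state of mean $N$ to a thermal state of mean $\kappa N+(\kappa-1)(N_B+1)$, we obtain immediately $H(\mathcal{N}(\overline{\theta}))=g(\kappa N_S+(\kappa-1)(N_B+1))$ and $H(\mathcal{N}(\theta_\alpha))=g(\kappa\lambda N_S+(\kappa-1)(N_B+1))$, the second being independent of $\alpha$ by displacement covariance of the amplifier. For the same reason, each $H(\mathcal{N}(|\alpha+\beta\rangle\langle\alpha+\beta|))$ equals the entropy of a displaced thermal state of mean $(\kappa-1)(N_B+1)$, so the integral in~(\ref{eq:RPS-entropy-2}) collapses to $g((\kappa-1)(N_B+1))$; subtracting this from $H(\mathcal{N}(\overline{\theta}))$ delivers the $R+P$ bound.

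The only nontrivial step, and thus the main obstacle, is the complementary-channel entropy $H(\mathcal{N}^c(\theta_\alpha))$, which is the quantity denoted $g_E^\kappa(\lambda,N_S,N_B)$ in the theorem statement. I would mirror the derivation of~(\ref{eq:entropy-eve-thermal}) in the thermal-noise case: displacement covariance eliminates the dependence on $\alpha$, so the calculation reduces to the entropy of Eve's share of the Gaussian state produced by injecting one mode of a $\lambda N_S$-two-mode-squeezed vacuum into the amplifier. Writing down Eve's bipartite covariance matrix in the standard block form, computing its two symplectic eigenvalues as the roots of the associated characteristic polynomial, and reading off $D$ as the relevant discriminant yields the pair $\tfrac{1}{2}\bigl(D\pm(\kappa-1)(\lambda N_S+N_B+1)-1\bigr)$; applying $g$ to each eigenvalue and summing gives the expression for $g_E^\kappa$ stated in Theorem~\ref{thm:CQE-amplify}.

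Finally, substituting the four entropies into~(\ref{eq:private-dynamic-region-1}--\ref{eq:private-dynamic-region-3}) produces exactly the three inequalities in the claim, and the passage from a truncated Fock space to the infinite-dimensional setting is handled by the same Fock-truncation-and-limiting argument used at the end of the proof of the lossy bosonic achievability theorem. Since we only need the achievability direction of Proposition~\ref{prop:RPS-dynamic} here, no degradability hypothesis on the amplifier is required.
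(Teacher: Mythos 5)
Your proposal is correct and follows essentially the same route as the paper: the coherent-state ensemble of~(\ref{eq:ensemble-public-private}), the same four entropy evaluations (with $H(\mathcal{N}(|\alpha+\beta\rangle\langle\alpha+\beta|))=g((\kappa-1)(N_B+1))$ for the displaced thermal output and displacement covariance for the conditional outputs), and the complementary-channel entropy $g_E^{\kappa}$ obtained from the Gaussian covariance-matrix/symplectic-eigenvalue computation, which is exactly the Holevo--Werner calculation the paper cites. Your closing remarks on the Fock-truncation limit and on needing only the achievability part of Proposition~\ref{prop:RPS-dynamic} (so no degradability assumption) also match the paper's treatment.
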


\begin{proof}
\textbf{(Of Theorems \ref{thm:CQE-amplify} and \ref{thm:RPS-amplify})} The
trade-off coding scheme for both the quantum dynamic and private dynamic
trade-off settings is the same as we used before in (\ref{eq:bosonic-ensemble}%
) and (\ref{eq:ensemble-public-private}), respectively. Thus, we only need to
calculate the various entropies associated with the amplifying channel. We
consider the quantum dynamic setting first and calculate the four entropies in
(\ref{eq:bosonic-CQE-entropies-1}-\ref{eq:bosonic-CQE-entropies-4}).

The state resulting from transmitting a thermal state with mean photon number
$N_{S}$ through the amplifying channel is a thermal state with mean photon
number $\kappa N_{S}+\left(  \kappa-1\right)  \left(  N_{B}+1\right)  $. Thus,
it follows that%
\[
H\left(  \mathcal{N}\left(  \overline{\theta}\right)  \right)  =g\left(
\kappa N_{S}+\left(  \kappa-1\right)  \left(  N_{B}+1\right)  \right)  .
\]
By the same argument as before, we have that%
\[
\int d\alpha\ p_{\overline{\lambda}N_{S}}\left(  \alpha\right)  \ H\left(
D\left(  \alpha\right)  \theta D^{\dag}\left(  \alpha\right)  \right)
=g\left(  \lambda N_{S}\right)  .
\]
The displacement operators acting on a thermal state are again covariant with
respect to the amplifying channel so that%
\begin{multline*}
\int d\alpha\ p_{\overline{\lambda}N_{S}}\left(  \alpha\right)  \ H\left(
\mathcal{N}\left(  D\left(  \alpha\right)  \theta D^{\dag}\left(
\alpha\right)  \right)  \right) \\
=g\left(  \kappa\lambda N_{S}+\left(  \kappa-1\right)  \left(  N_{B}+1\right)
\right)  .
\end{multline*}
Finally, we can make use of the Holevo-Werner results in Section~V-A of
Ref.~\cite{HW01} to show that%
\begin{multline*}
\int d\alpha\ p_{\overline{\lambda}N_{S}}\left(  \alpha\right)  \ H\left(
\mathcal{N}^{c}\left(  D\left(  \alpha\right)  \theta D^{\dag}\left(
\alpha\right)  \right)  \right) \\
=g_{E}^{\kappa}\left(  \lambda,N_{S},N_{B}\right)  ,
\end{multline*}
where%
\begin{multline}
g_{E}^{\kappa}\left(  \lambda,N_{S},N_{B}\right)  \equiv
\label{eq:eve-entropy-amplify}\\
g\left(  \left[  D+\left(  \kappa-1\right)  \left[  \lambda N_{S}%
+N_{B}+1\right]  -1\right]  /2\right) \\
+g\left(  \left[  D-\left(  \kappa-1\right)  \left[  \lambda N_{S}%
+N_{B}+1\right]  -1\right]  /2\right)  ,
\end{multline}
and%
\begin{multline*}
D^{2}\equiv\left[  \lambda\left(  1+\kappa\right)  N_{S}+\left(
\kappa-1\right)  \left(  N_{B}+1\right)  +1\right]  ^{2}\\
-4\kappa\lambda N_{S}\left(  \lambda N_{S}+1\right)  .
\end{multline*}
Thus, the expression for the quantum dynamic capacity region of an amplifying
channel is as stated in Theorem~\ref{thm:CQE-amplify} above.

We can also use a similar coding scheme as in
(\ref{eq:ensemble-public-private}) for the private dynamic setting. The only
other entropy that we need to calculate for this setting is the one in
(\ref{eq:private-conditional-coherent-entropy}):%
\begin{multline*}
\int\int d\alpha\ d\beta\ p_{\overline{\lambda}N_{S}}\left(  \alpha\right)
p_{\left(  \lambda\right)  N_{S}}\left(  \beta\right)  \ H\left(
\mathcal{N}\left(  \left\vert \alpha+\beta\right\rangle \left\langle
\alpha+\beta\right\vert \right)  \right) \\
=g\left(  \left(  \kappa- 1\right)  \left( N_{B}+1\right) \right)  ,
\end{multline*}
because Bob's state is a displaced thermal state with mean photon number
$\left(  \kappa- 1\right)  \left( N_{B}+1\right) $ (the amount of noise that
the environment injects into the state). Thus, the expression for the private
dynamic achievable rate region is as stated in Theorem~\ref{thm:RPS-amplify} above.
\end{proof}

Figure~\ref{fig:amp-bosonic}\ plots the classical-quantum and
classical-entanglement trade-off curves for an amplifying channel with
$N_{S}=200$, $N_{B}=0$, and increasing values of the amplification
parameter~$\kappa$. The figures demonstrate that increased amplification
decreases performance, but the upshot is that both trade-off settings still
exhibit a remarkable improvement over time-sharing.%
\begin{figure*}
[ptb]
\begin{center}
\includegraphics[
natheight=4.159700in,
natwidth=9.033000in,
height=3.1254in,
width=6.7525in
]%
{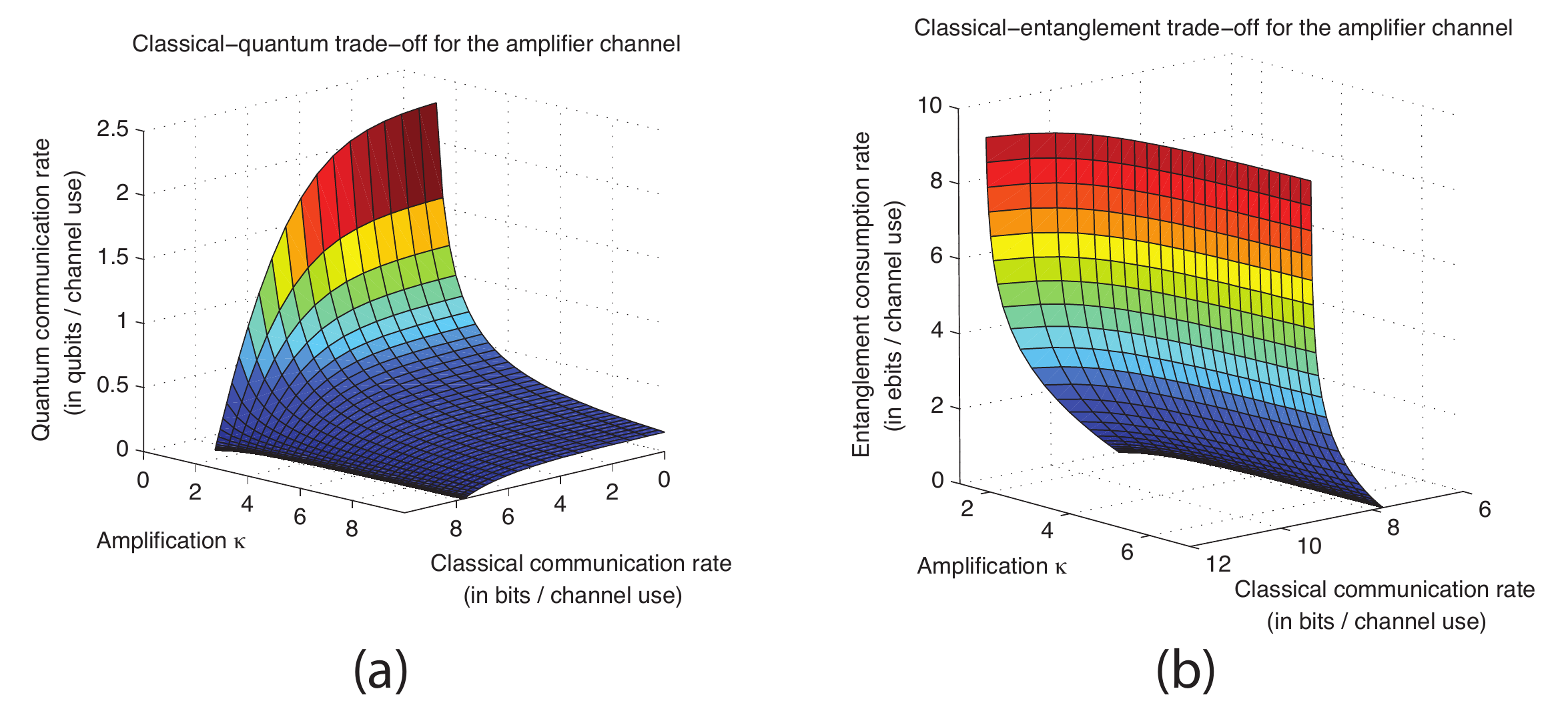}%
\caption{(Color online) (a)\ Trade-off between classical and quantum communication for the
bosonic amplifying channel with $N_{S}=200$ and $N_{B}=0$ for increasing
values of the amplification parameter~$\kappa$. (b) Trade-off between
entanglement-assisted and unassisted classical communication for the same
channel. For both trade-offs, more amplification degrades performance because
it introduces too much noise.}%
\label{fig:amp-bosonic}%
\end{center}
\end{figure*}

\section{Unruh Channel}

\label{sec:unruh}

Br\'{a}dler \textit{et al}.~studied the information theoretic consequences of
the Unruh effect in a series of
papers~\cite{BHP09,Bradler_Cloning,bradler:072201,BHTW10,Bradler_Rindler,JBW11}%
. They dubbed \textquotedblleft the Unruh channel\textquotedblright\ as the
channel induced by the Unruh effect when an observer encodes a qubit
into a single-excitation subspace of her Unruh modes and a uniformly accelerating Rindler
observer detects this state (by single-excitation encoding, we mean a
dual-rail encoding with $\left\vert 0_{L}\right\rangle \equiv\left\vert
01\right\rangle $ and $\left\vert 1_{L}\right\rangle \equiv\left\vert
10\right\rangle $). It is well known that the transformation corresponding to
the Unruh effect is equivalent to the transformation in
(\ref{eq:amplify-channel-1}-\ref{eq:amplify-channel-2}) for an amplifying
bosonic channel~\cite{NJBN11}. The results of Br\'{a}dler \textit{et
al}.~demonstrate that \textquotedblleft the Unruh channel\textquotedblright%
\ has a beautiful structure as a countably infinite direct sum of universal
cloning machine channels~\cite{BHP09}, and this property implies that both the
quantum dynamic and private dynamic capacity regions are
single-letter~\cite{BHTW10,WH10,WH10a}. More generally, their results with
single-excitation encodings of course apply to amplifying bosonic channels.%

\begin{figure*}
[ptb]
\begin{center}
\includegraphics[
natheight=4.386300in,
natwidth=11.153500in,
width=6.7525in
]%
{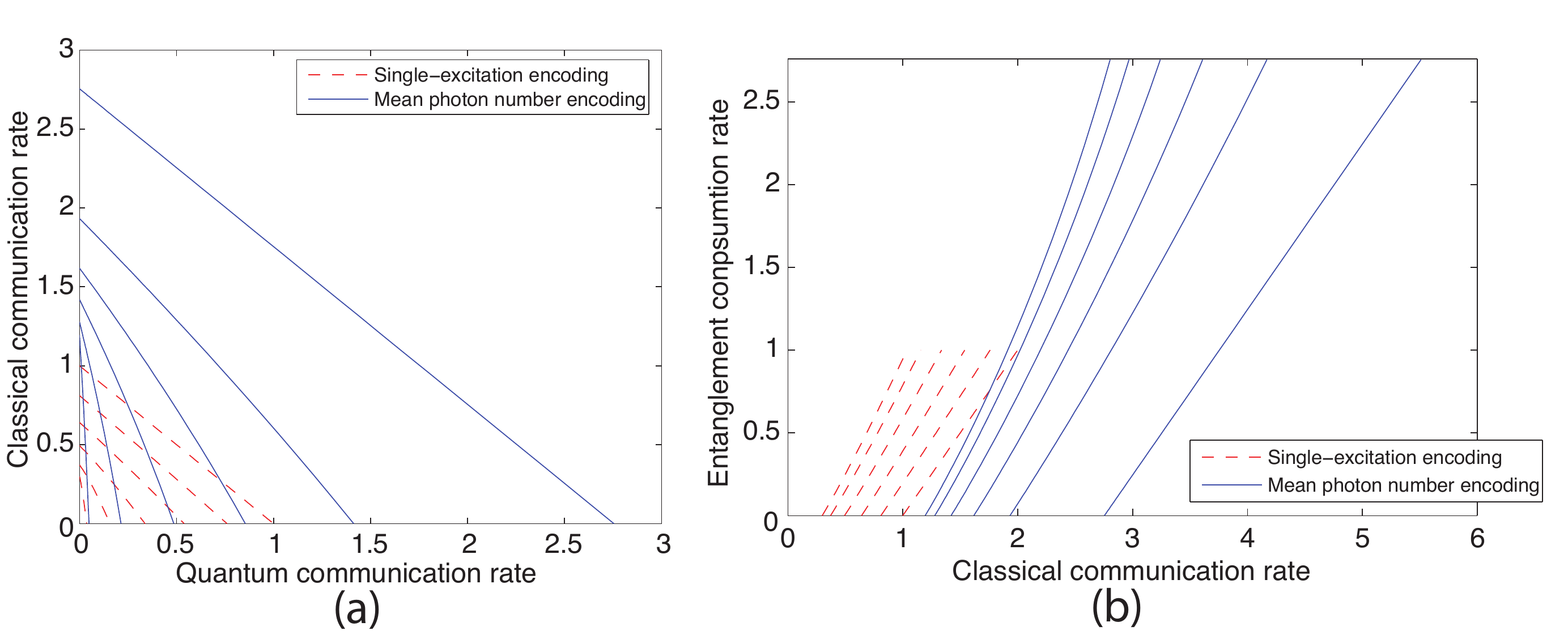}%
\caption{(Color online) Trade-off curves between (a) classical and quantum communication and
(b) entanglement-assisted and unassisted classical communication. Units for classical
communication, quantum communication, and entanglement consumption are bits per channel use,
qubits per channel use, and ebits per channel use, respectively. Each figure
plots a trade-off curve (rightmost to leftmost) for increasing values of the
acceleration parameter~$z\in\left\{  0,0.2,0.4,0.6,0.8,0.95\right\}  $. The
dotted curves are the trade-off curves from Ref.~\cite{BHTW10}\ when the
encoding is restricted to a single-excitation subspace. The solid lines are
achievable rates with a mean photon-number constraint of $1/2$. The result is
that the mean photon-number constrained encoding always outperforms
single-excitation encoding.}%
\label{fig:unruh-comparison}%
\end{center}
\end{figure*}
In spite of these analytical results, one might question calling this channel
\textit{the} Unruh channel because the encoding has a specific form as a
dual-rail encoding. More generally, we could study the capacities of the
transformation corresponding to the Unruh effect by imposing a
mean-photon-number constraint at the input, rather than restricting the form
of the encoding. In this way, we can relate the achievable rates in
Appendix~\ref{sec:amplifier} to the capacity results of Br\'{a}dler \textit{et
al}. In order to make a fair comparison, we should restrict the mean number of
photons at the input to be 1/2 because this is the mean number of photons when
sending a dual-rail maximally mixed state of the form $\left(  \left\vert
01\right\rangle \left\langle 01\right\vert +\left\vert 10\right\rangle
\left\langle 10\right\vert \right)  /2$, but we should then multiply all rates
by two because the Unruh channel of Br\'{a}dler \textit{et al}.~exploits two
uses of the Unruh transformation for one use of the Unruh channel. It is fair
to consider the maximally mixed state as input because this is the state that
achieves the boundary of the various capacity regions when tracing over all
other systems not input to the channel~\cite{BHTW10}. Furthermore, note that
the acceleration parameter~$z$ of Br\'{a}dler \textit{et al}.~in
Refs.~\cite{BHP09,BHTW10}\ is related to the amplification parameter$~\kappa$
for the amplifying bosonic channel via $z=(\kappa-1)/\kappa$.

Figure~\ref{fig:unruh-comparison}\ compares the trade-off curves for the
\textquotedblleft Unruh channel\textquotedblright\ with the trade-off curves
for the Unruh transformation with a mean photon-number constraint. The result
is that the latter outperforms the former for all depicted values of the
acceleration parameter~$z$ (though note that the quantum rates become
comparable as the acceleration parameter$~z$ increases to 0.95). The result in
the figure is unsurprising because an encoding with a mean photon-number
constraint has access to more of Fock space than does a restricted encoding. A
similar result occurs when comparing the amplitude damping channel with the
lossy bosonic channel~\cite{GF05}\ (the amplitude damping channel results from
sending a superposition of the vacuum state and a single-photon state through
the lossy bosonic channel).

\section{Conclusion}

In summary, we have provided detailed derivations of the main results
announced in our previous paper \cite{WHG12}. In particular, we have shown
that the rate regions given in (1) and (2) characterize the
classical-quantum-entanglement trade-off and the public-private-secret-key
trade-off, respectively, for communicating over a pure-loss bosonic channel.
We have argued for a \textquotedblleft rule of thumb\textquotedblright\ for
trade-off coding, so that a sender and receiver can make the best use of
photon-number sharing if a large number of photons are available on average
for coding. We have also argued that the regions in (1) and (2) for $\eta
\geq1/2$ are optimal, provided that a longstanding minimum output entropy
conjecture is true. Finally, we have generalized the achievable rate regions
in (1) and (2)\ to the case of thermal-noise and amplifying bosonic channel,
with the latter results applying to the Unruh channel studied in previous work.

There are certainly some interesting questions to consider for future work.
First, it would be great to lay out explicit encoding-decoding architectures
that come close to achieving the rate regions given in (1) and (2). Progress
along these lines is in Ref.~\cite{WGTL12,WG11,WG11a,WR12,WR12a}\ for the case of classical or quantum
communication alone, but more generally, there might be some way to leverage these
results for a trade-off coding architecture. It would also be good to
investigate whether the recent bounds derived in Ref.~\cite{KS12b}\ could be
used to determine true outer bounds on the regions given here. Finally, it
would be good to go beyond the single-mode approximation for the Unruh
channel, as recent work has demonstrated that this approximation is not
sufficient for modeling the Unruh effect in general quantum
information-theoretic applications \cite{PhysRevA.82.042332}. Refs.~\cite{DRW12,DDMB12}
identify the communication conditions under which effective single-mode Unruh channels
can be identified.

We thank J.~H.~Shapiro for reminding us of relevant results~\cite{GSE07}
and Nick Menicucci for a useful discussion. MMW
acknowledges the MDEIE\ (Qu\'{e}bec) PSR-SIIRI international collaboration
grant. PH\ acknowledges the hospitality of the Stanford Institute for
Theoretical Physics as well as funding from the Canada Research Chairs
program, the Perimeter Institute, CIFAR, FQRNT's INTRIQ, NSERC, ONR through
grant N000140811249, and QuantumWorks. SG acknowledges the DARPA Information
in a Photon program, contract \#HR0011-10-C-0159.

%\bibliography{Ref}
%\bibliographystyle{apsrev}

\end{document}